\acrodef{ACDIS}[ACDIS]{Adaptive Communication Decision and Information Systems}
\acrodef{AEP}{Asymptotic Equipartition Property}
\acrodef{AoA}{Angle of Arrival}
\acrodef{AWGN}{Additive White Gaussian Noise}
\acrodef{AVC}[AVC]{Arbitrarily Varying Channel}
\acrodef{BER}{Bit-Error-Rate}
\acrodef{BEC}{Binary Erasure Channel}
\acrodef{BPSK}{Binary Phase-Shift Keying}
\acrodef{BSC}{Binary Symmetric Channel}
\acrodef{BICM}[BICM]{Bit-Interleaved Coded-Modulation}
\acrodef{CDF}[CDF]{Cumulative Distribution Function}
\acrodef{CGF}[CGF]{Cumulant Generating Function}
\acrodef{CLT}[CLT]{Central Limit Theorem}
\acrodef{cq}[c-q]{Classical-Quantum}
\acrodef{CSI}[CSI]{Channel State Information}
\acrodef{DMC}[DMC]{Discrete Memoryless Channel}
\acrodef{DMS}[DMS]{Discrete Memoryless Source}
\acrodef{ERM}[ERM]{Empirical Risk Minimization}
\acrodef{FER}[FER]{Frame Error Rate}
\acrodef{ICA}[ICA]{Independent Component Analysis}
\acrodef{iid}[i.i.d.]{independent and identically distributed}
\acrodef{IoT}[IoT]{Internet of Things}
\acrodef{KKT}[KKT]{Karush-Kuhn Tucker}
\acrodef{LASSO}[LASSO]{Least Absolute Shrinkage and Selection Operator}
\acrodef{LPD}[LPD]{Low Probability of Detection}
\acrodef{LDPC}[LDPC]{Low-Density Parity-Check}
\acrodef{LLMS}[LLMS]{Linear Least Mean Square}
\acrodef{LMS}[LMS]{Least Mean Square}
\acrodef{MAC}[MAC]{multiple-access channel}
\acrodef{MGF}[MGF]{Moment Generating Function}
\acrodef{MLC}[MLC]{Multi-Level Coding}
\acrodef{MLE}[MLE]{Maximum Likelihood Estimate}
\acrodef{MIMO}[MIMO]{Multiple-Input Multiple-Output}
\acrodef{MISO}{Multiple-Input Single-Output}
\acrodef{MSD}[MSD]{Multi-Stage Decoding}
\acrodef{MMSE}[MMSE]{Minimum Mean-Square Error}
\acrodef{PAC}[PAC]{Probably Approximately Correct}
\acrodef{PCA}[PCA]{Principal Component Analysis}
\acrodef{PDF}[PDF]{Probability Density Function}
\acrodef{PMF}[PMF]{Probability Mass Function}
\acrodef{POVM}[POVM]{Positive Operator-Valued Measure}
\acrodef{PVM}[PVM]{Projection-Valued Measure}
\acrodef{PPM}[PPM]{Pulse Position Modulation}
\acrodef{PSD}{Power Spectral Density}
\acrodef{PSK}{Phase Shift Keying}
\acrodef{QKD}{Quantum Key Distribution}
\acrodef{ROC}{Receiver Operating Characteristic}
\acrodef{CVQKD}{Continuous-Variable \ac{QKD}}
\acrodef{QPSK}{Quadrature Phase-Shift Keying}
\acrodef{RF}{Radio-Frequency}
\acrodef{RV}{random variable}
\acrodef{SIMO}{Single-Input Multiple-Output}
\acrodef{SNR}{Signal-to-Noise Ratio}
\acrodef{SVM}[SVM]{Support Vector Machine}
\acrodef{TPCP}{Trace-Preserving Completely-Positive}
\acrodef{wrt}[w.r.t.]{with respect to}
\acrodef{WSS}{Wide Sense Stationary}
\newcommand{\calA}{\mathcal{A}}
\newcommand{\bfC}{\mathbf{C}}
\newcommand{\calC}{\mathcal{C}}
\newcommand{\calD}{\mathcal{D}}
\newcommand{\bbD}{\mathbb{D}}
\newcommand{\bbE}{\mathbb{E}}
\newcommand{\bbH}{\mathbb{H}}
\newcommand{\bbN}{\mathbb{N}}
\newcommand{\calP}{\mathcal{P}}
\newcommand{\bbP}{\mathbb{P}}
\newcommand{\bbR}{\mathbb{R}}
\newcommand{\calS}{\mathcal{S}}
\newcommand{\calT}{\mathcal{T}}
\newcommand{\calX}{\mathcal{X}}
\newcommand{\bfx}{\mathbf{x}}
\newcommand{\calY}{\mathcal{Y}}
\newcommand{\bfy}{\mathbf{y}}
\newcommand{\calZ}{\mathcal{Z}}
\newcommand{\card}[1]{{|{#1}|}}
\newcommand{\abs}[1]{{\left|{#1}\right|}}
\newcommand{\argmin}{\mathop{\text{argmin}}}
\newcommand{\argmax}{\mathop{\text{argmax}}}
\newcommand{\intseq}[2]{[{#1};{#2}]}
\newcommand{\eqdef}{\triangleq}
\newcommand{\E}[2][]{{\mathbb{E}_{#1}\left[#2\right]}}
\renewcommand{\P}[2][]{{\mathbb{P}_{#1}\left(#2\right)}}
\newcommand{\indic}[1]{{\mathbf{1}\{#1\}}}
\newcommand{\D}[3][]{{\mathbb{D}_{#1}}\!\left(#2\,\middle\Vert\,#3\right)} 
\newcommand{\avgI}[1]{{{\mathbb{I}}\!\left(#1\right)}} 
\newcommand{\avgH}[1]{{\mathbb{H}}\!\left(#1\right)}
\newcommand{\Hb}[1]{{h_b}\left(#1\right)} 
\renewcommand{\leq}{\leqslant} 
\renewcommand{\geq}{\geqslant} 
\newcommand{\subparagraph}{}
\newcounter{relctr} 
\everydisplay\expandafter{\the\everydisplay\setcounter{relctr}{0}} 
\newcommand\labrel[2]{%
  \begingroup
    \refstepcounter{relctr}%
    \stackrel{\textnormal{(\alph{relctr})}}{\mathstrut{#1}}%
    \originallabel{#2}%
  \endgroup
}
\newtheorem{theorem}{Theorem}%
\newtheorem{corollary}[theorem]{Corollary}%
\newtheorem{lemma}[theorem]{Lemma}%
\newtheorem{definition}[theorem]{Definition}%
\newtheorem{remark}{Remark}
\Crefname{lemma}{Lemma}{Lemmas}
\crefname{lemma}{lemma}{lemmas}
\pgfplotsset{compat=1.16}
\title{Rate and Detection-Error Exponent Tradeoff\\ for Joint Communication and Sensing\\ of Fixed Channel States}
\author{Meng-Che Chang, Shi-Yuan Wang, Tuna Erdo\u{g}an, and Matthieu R. Bloch\thanks{The authors are with the School of Electrical and Computer Engineering, Georgia Institute of Technology, Atlanta, GA (email:\{mchang301, shi-yuan.wang, etuna3\}@gatech.edu, matthieu.bloch@ece.gatech.edu). This work has been supported by the National Science Foundation (NSF) under grants 1955401 and 2148400 as part of the Resilient \& Intelligent NextG Systems (RINGS) program. Parts of the results have been presented at the 2022 IEEE International Symposium on Joint Communications \& Sensing~\cite{Chang2022Rate}.}}
\begin{document}
\maketitle

\begin{abstract}
  We study the information-theoretic limits of joint communication and sensing when the sensing task is modeled as the estimation of a discrete channel state fixed during the transmission of an entire codeword. This setting captures scenarios in which the time scale over which sensing happens is significantly slower than the time scale over which symbol transmission occurs. The tradeoff between communication and sensing then takes the form of a tradeoff region between the rate of reliable communication and the state detection-error exponent. We investigate such tradeoffs for both mono-static and bi-static scenarios, in which the sensing task is performed at the transmitter or receiver, respectively. In the mono-static case, we develop an exact characterization of the tradeoff in open-loop, when the sensing is not used to assist the communication. We also show the strict improvement brought by a closed-loop operation, in which the sensing informs the communication. In the bi-static case, we develop an achievable tradeoff region that highlights the fundamentally different nature of the bi-static scenario. Specifically, the rate of communication plays a key role in the characterization of the tradeoff and we show how joint strategies, which simultaneously estimate message and state, outperform successive strategies, which only estimate the state after decoding the transmitted message.
\end{abstract}

\section{Introduction}
\label{sec:intro}

A core feature envisioned for the next generation of mobile networks is the convergence of communication and sensing~\cite{Paul2017Survey,Fettweis2021Joint} (also known as integrated communication and sensing), motivated in part by the need to offer detection and localization capabilities to interconnected devices interacting with the real world (robots, UAVs, etc.). This convergence is also enabled by the shift of communication frequencies towards the mmWave part of the spectrum, which allows a single \ac{RF} hardware to perform both communication and sensing. While joint communication and sensing has already attracted interest in the context of joint communication and radar~\cite{Chiriyath2016Inner,Zheng_2019,Liu2020Joint}, theoretical, algorithmic, and hardware-related challenges remain and must be addressed to assess the true potential of joint communication and sensing~\cite{Fettweis2021Joint,Wild_2021}.

The objective of the present work is to further the theoretical understanding of the information-theoretic limits of joint communication and sensing, and specifically to better understand the tradeoffs incurred by a joint operation. While information-theoretic models often abstract fine channel modeling aspects, they provide valuable insights to identify the regimes in which tradeoffs exist and to quantify their severity. In particular, information-theoretic limits of joint communication and sensing are naturally approached from the perspective of joint channel transmission and channel state estimation, where the state simultaneously represents the quantity of interest for sensing and affects the communication. Early works~\cite{Kim2008,Choudhuri2011Causal,Zhang_2011} have considered state-dependent channel models with \ac{iid} channel states, in which the encoder attempts to simultaneously communicate and facilitate the estimation of the state at the receiver, and have leveraged rate-distortion theory to characterize the optimal tradeoff between communication rate and state reconstruction accuracy. Recently,  \cite{Kobayashi2018Joint,Ahmadipour2022Information} have revisited the model of~\cite{Zhang_2011} by shifting the task of estimating the state from the receiver to the transmitter using generalized feedback, a situation more in line with the scenarios envisioned in the context of joint communication and sensing, and characterized again the optimal tradeoff between rate and average state distortion. Extensions to multiple-access channels~\cite{Kobayashi_2019} and broadcast channels~\cite{Ahmadipour2021Joint,Ahmadipour2022Information} have been investigated, as well, although exact characterizations of the tradeoff remain elusive. Recent works have also considered secure joint communication and sensing to quantify and investigate the intrinsic information leakage associated to a joint operation~\cite{Su2021Secure,Ren2022Robust,Guenlue2022Secure}.

A common feature of~\cite{Zhang_2011,Kobayashi2018Joint,Kobayashi_2019,Ahmadipour2021Joint,Ahmadipour2022Information,Guenlue2022Secure} is that the \ac{iid} nature of the channel state precludes any prediction. Consequently, state detection and estimation strategies are open-loop and the tradeoff between communication and sensing reduces to a resource allocation problem, in which the choice of a channel input distribution dictates the tradeoff.\footnote{The situation is more nuanced when introducing secrecy constraints as in~\cite{Guenlue2022Secure} as feedback is known to improve secrecy.} Furthermore, the \ac{iid} model is not well-suited to applications in which the physical phenomena sensed, such as the presence or absence of an obstacle that would disrupt line-of-sight communication, evolve on a time scale that is much slower than the time scale at which communication symbols are transmitted. To address such applications, one may instead adopt a model in which the channel state remains \emph{constant} over the block-length used for communication. Except in rare cases, estimating the channel state is then always possible so that the tradeoff between communication and sensing appears between the rate of communication and the accuracy of the state estimation. For the case of uncountably infinite channel states parameterized by continuous variables, accuracy can be captured by the Cramer-Rao bound~\cite{Xiong2022Flowing}. For the case of finitely many channel states, accuracy may be captured by the asymptotic state detection error exponent, as studied thereafter and already reported in our preliminary results~\cite{Chang2022Rate} concurrently with~\cite{Joudeh2022Joint,Wu2022Joint}.

Two kinds of joint communication and sensing models are considered in this paper, namely, mono-static and bi-static models, where we borrow these terminologies from mono-static and bi-static radars. In both models, there is a transmitter-receiver pair attempting to convey messages without exact knowledge of the channel state. The difference between these two models is that sensing is performed at the transmitter in the mono-static model and at the receiver in the bi-static model. Since the transmitter always knows the codeword, sensing in the mono-static model can be done coherently with the knowledge of the transmitted waveform.  In contrast, the receiver in the bi-static model needs to simultaneously sense the channel and decode the message (or waveform).

In what follows, we study the rate/detection-error exponent tradeoff for both mono-static and bi-static models~\cite{Paul2017Survey} with fixed channel states. Accordingly, our approach draws on the extensive literature on controlled sensing~\cite{Nitinawarat2013,Naghshvar2013a,Naghshvar2013} and channel estimation with pilot sequences~\cite{Tchamk2006use,Mahajan_2012}. Our specific contributions are:
\begin{inparaenum}
\item we characterize the exact rate-detection exponent tradeoff for mono-static open-loop joint communication and sensing;\footnote{The concurrent works~\cite{Joudeh2022Joint,Wu2022Joint} developed characterizations similar to ours in more restricted settings; the results therein are subsumed by Theorem~\ref{thm:mono-capacity} of the present document and~\cite[Theorem 1]{Chang2022Rate}.}
\item we show a strict improvement of the rate-detection exponent tradeoff for mono-static closed-loop joint communication and sensing through learning the channel state and adapting the channel code;
\item we provide a partial characterization of the rate-detection exponent tradeoff for bi-static open-loop joint communication and sensing, and show that jointly detecting the state and decoding the code strictly outperforms the naive successive method, in which the detection of the state follows the decoding of the channel code.
\end{inparaenum}
We further illustrate the results with simple examples that capture the essence of realistic joint communication and sensing scenarios.

The remaining of the paper is organized as follows. We introduce the mono-static and bi-static models for joint communication and sensing in Section~\ref{sec:joint-comm-sens-models}. We present our main results in Section~\ref{sec:main-results}, along with numerical examples illustrating the joint communication and sensing tradeoffs. 
We relegate all proofs to Section~\ref{sec:proofs} to streamline the presentation.

\section{Notation}
\label{sec:notation}
For any set $\Omega$, the indicator function is defined as $\mathbf{1}(\omega\in\Omega) = 1$ if $\omega\in \Omega$ and $0$ otherwise.
For any discrete set $\calX$, $\mathcal{P}_{\calX}$ is the set of all probability distributions on $\calX$. For $n\in\bbN^*$, a sequence of length $n$ is implicitly denoted $\mathbf{x}\eqdef(x_1,\cdots,x_n)\in \calX^n$, while $x^i\eqdef (x_1,\cdots,x_i)\in \calX^i$ denotes a sequence of length $i$, and $x_j^i=(x_j,\cdots,x_i)$ is a sub-sequence of $\mathbf{x}$.  For $\bfx\in\calX^n$, $\hat{p}_{\mathbf{x}}$ denotes the type of $\mathbf{x}$, i.e., $\hat{p}_{\mathbf{x}}(x) = \frac{1}{n}\sum_{i=1}^n\indic{x_i=x}$. For any type $P$, $\mathcal{T}_{P}^n$ is the corresponding type class, i.e., the set of all sequences $\mathbf{x}\in\calX^n$ such that $\hat{p}_{\mathbf{x}}=P$. $\mathcal{P}_{\calX}^n$ is the set of all possible types for length $n$ sequences in $\calX^n$.
Let $\calP_{\calY|\calX}$ be the set of all conditional probabilities of $Y\in\calY$ given $X\in\calX$. Given a sequence $\bfx\in \mathcal{X}^n$ and $\bfy\in\mathcal{Y}^n$, we define $\hat{p}_{\bfy|\bfx}$ as the empirical conditional type, i.e., $\hat{p}_{\bfy|\bfx}(b|a) = \sum_{i=1}^n \indic{x_i=a,y_i=b}/\sum_{i=1}^n \indic{x_i=a}$ for all $a\in\mathcal{X}$ such that $\hat{p}_{\bfx}(a)>0$ and $b\in\mathcal{Y}$. 
Let $\calP_{\calY|\calX}^n$ be the set of all conditional types for length $n$ sequences $\bfx\in \calX^n$ and $\bfy\in \calY^n$.
For any conditional type $P_{Y|X}\in\mathcal{P}_{\mathcal{Y}|\mathcal{X}}^n$, we also define $\mathcal{T}_{\cdot|\bfx}(P_{Y|X})$ as the conditional type class of $P_{Y|X}$, i.e., the set of sequences $\bfy\in\mathcal{Y}^n$ such that $\hat{p}_{\bfy|\bfx}=P_{Y|X}$, and define $\calT_{\bfy|\cdot}(P_{Y|X})$ as the set of sequences $\bfx\in\calX^n$ such that $\hat{p}_{\bfy|\bfx}=P_{Y|X}$.
Given two conditional distributions $W_{Y|X}$ and $P_{Y|X}$, we set
$
\left|W_{Y|X}-P_{Y|X}\right|_{\infty}\triangleq \max_{a\in\mathcal{X},b\in\mathcal{Y}} |W_{Y|X}(b|a)-P_{Y|X}(b|a)|.
$ 
We let $\avgH{P_X}\eqdef-\sum_{x\in\calX}P_X(x)\log P_X(x)$ be the entropy of $X\sim P_X$. If $W_{Y|X}$ is a conditional distribution on $Y\in\calY$ given $X\in\calX$, $\avgH{W_{Y|X}\middle|P_X}\eqdef\E[P_X]{\avgH{W_{Y|X}(\cdot|X)}}$ is the conditional entropy of $W_{Y|X}$ given an input distribution $P_X$ and $\mathbb{I}(P_X,W_{Y|X})\eqdef\avgH{W_{Y|X}\circ P_X}-\avgH{W_{Y|X}\middle|P_X}$ is the mutual information between $X$ and $Y$, where $X\sim P_X$ and $Y\sim P_X\circ W_{Y|X}\eqdef\sum_{x}P_X(x)W_{Y|X}(\cdot|x)$. The relative entropy between $P_{Y|X}$ and $W_{{Y|X}}$ given an input distribution $P_X$ is
$\D{P_{Y|X}}{W_{Y|X}\middle|P_X}\eqdef\E[P_X]{\D{P_{Y|X}(\cdot|X)}{W_{Y|X}(\cdot|X)}}$.
Throughout the paper, $\log$ is \ac{wrt} base $e$, and therefore all the information quantities should be understood in \emph{nats}. Moreover, for $a,b\in\bbR$ such that $\lfloor a\rfloor\leq \lceil b\rceil$, we define $\intseq{a}{b}\eqdef\{\lfloor a\rfloor, \lfloor a\rfloor+1, \cdots, \lceil b\rceil-1, \lceil b\rceil\}$; otherwise $\intseq{a}{b}\eqdef\emptyset$. In addition, for any $x\in\bbR$, we let $\abs{x}^+$ denote $\max(x,0)$.

\section{Joint communication and sensing models}
\label{sec:joint-comm-sens-models}

\subsection{Mono-Static Model}
\label{sec:mono-static}

\begin{figure}[htb]
    \centering
    \scalebox{0.8}{\includegraphics[]{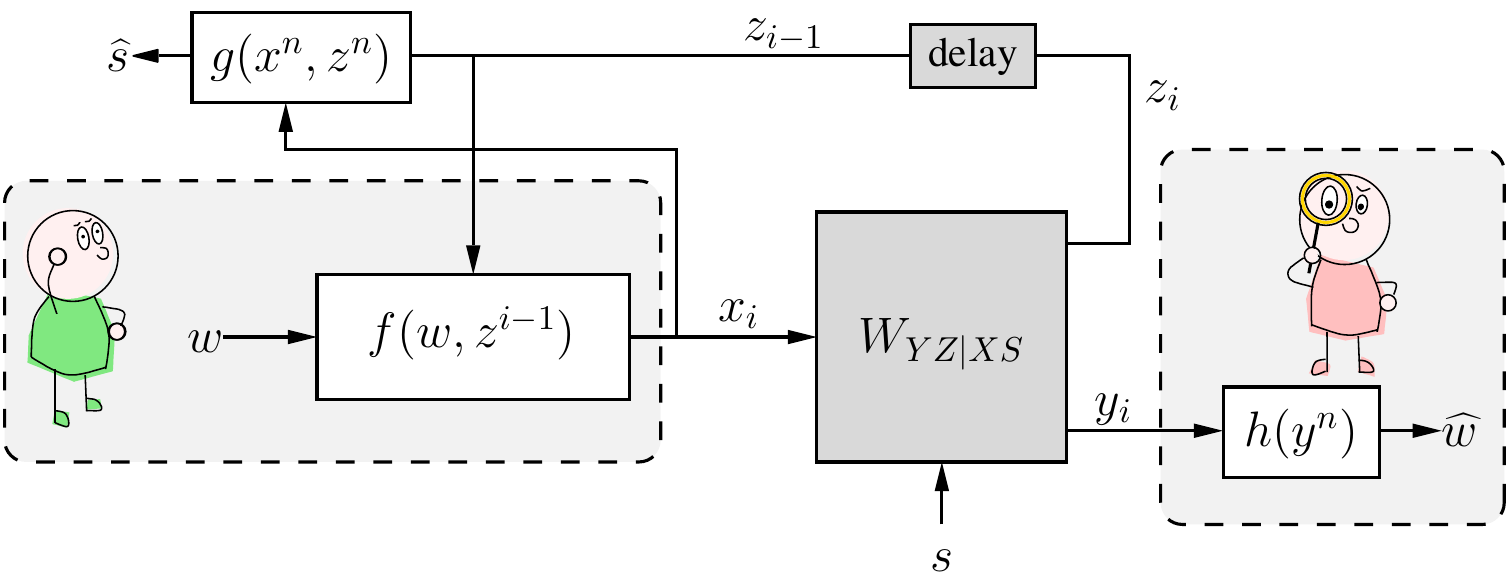}}
    \caption{Mono-static joint communication and sensing model.}
    \label{fig:model}
\end{figure}
The mono-static joint communication and sensing model is illustrated in Fig.~\ref{fig:model}, in which a transmitter attempts to communicate with a receiver over a state-dependent \ac{DMC}, also known as a compound channel, while simultaneously probing the channel state in a strictly causal manner through a sensing channel. Specifically, the transmitter encodes a uniformly distributed message $W\in\intseq{1}{M}$ into a length $n$ codeword $X^n$, of which symbols are transmitted over a \ac{DMC} with transition probability $W_{YZ|XS}$. The state $S$, a priori unknown to both the transmitter and the receiver, is assumed to be \emph{fixed} during the whole duration of the transmission and takes value in a finite set $\calS$.
The transmitter has the ability to estimate the channel state by using past observations obtained from the output $Z$ of the \ac{DMC}, allowing it to adapt its transmission in an online fashion. We assume in this paper that Chernoff information between channels $W_{Z|X,s}$ and $W_{Z|X,s'}$ is non-zero for all $s\neq s'$, i.e.,
$$ \max_{P_X\in\mathcal{P}_{\mathcal{X}}}\max_{\ell\in [0,1]} -\sum_{x}P_X(x)\log \left(\sum_{z} W_{Z|X,s}(z|x)^{\ell}W_{Z|X,s'}(z|x)^{1-\ell}\right) > 0.$$
Formally, the encoder consists of a set of functions 
\begin{align*}
    f_i^{(m)}:\intseq{1}{M}\times\calZ^{i-1}\to\calX:(w,z^{i-1})\mapsto x_i \eqdef f_i^{(m)}(w,z^{i-1})
\end{align*}
defined for every $i\in\intseq{1}{n}$, while the state estimator is a function
\begin{align*}
    g^{(m)}:\calX^{n}\times \calZ^{n}\to\calS:(x^{n}, z^{n})\mapsto \hat{s}.
\end{align*}
The message decoder is a function
\begin{align*}
    h^{(m)}:\calY^n\to\intseq{1}{M}:y^n\mapsto\hat{w}.
\end{align*}
A code $\calC^{(m)}$ then consists of the tuple $(\{f_i^{(m)}\}_{i\in [1;n]},g^{(m)},h^{(m)})$, as well as the implicitly defined associated message set $\intseq{1}{M}$. 

\subsection{Bi-Static Model}
\label{sec:bi-static-model}

\begin{figure}[htb]
    \centering
    \scalebox{0.8}{\includegraphics[]{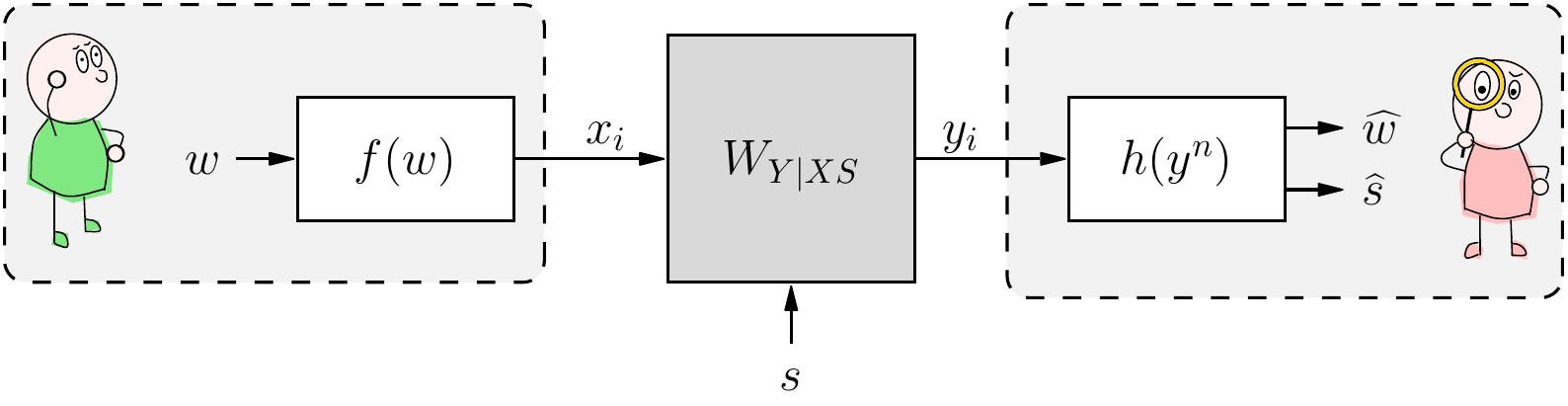}}
    \caption{Bi-static joint communication and sensing model.}
    \label{fig:model_bi-static}
  \end{figure}
  
As illustrated in Fig.~\ref{fig:model_bi-static}, the bi-static joint communication and sensing model differs from the mono-static one in that the receiver should simultaneously sense the state and decode the message. Specifically, the transmitter encodes a uniformly distributed message $W\in[1;M]$ into a length $n$ codeword $X^n$, of which symbols are transmitted over a DMC with transition probability $W_{Y|XS}$. We again assume that the a priori unknown state $S$ is \emph{fixed} during the whole duration of the transmission and takes value in a finite set
  $\mathcal{S}$. The encoder is then defined as 
$$
    f^{(b)}:[1;M] \to \mathcal{X}^n:w\mapsto \mathbf{x}_w,
$$
while the decoder and the state estimator are the functions
$$
    g^{(b)}:\mathcal{Y}^n \to \mathcal{S}:y^n\mapsto \hat{s},
$$
and 
$$
    h^{(b)}:\mathcal{Y}^n \to \intseq{1}{M}:y^n\mapsto \hat{w}.
$$
The code $\mathcal{C}^{(b)}$ in the bi-static model then consists of the tuple $(f^{(b)},g^{(b)},h^{(b)})$, as well as the message set $[1;M]$. 

\begin{remark}
  A key difference between the mono-static model and the bi-static model is that the former reduces the coupling of performance between communication and sensing to the type of the transmitted codewords. Specifically, the mono-static model endows the state estimator with full knowledge of the transmitted codeword, so that correct decoding is irrelevant to the state detection performance. In contrast, the bi-static model requires the receiver to perform a joint estimation of the codeword and of the state. 
\end{remark}

\begin{remark}
The model of Fig.~\ref{fig:model} differs from the ones in~\cite{Zhang_2011,Ahmadipour2022Information}, in which the state is \ac{iid} and changing from symbol to symbol. Our model captures a scenario in which the coherence time of the state is much longer than the duration of a transmission. Since the state does not change during the transmission, the transmitter can gradually obtain an accurate estimation about the state and is able to adapt the transmission scheme according to the estimated channel state. As a result our model also captures the ability to adapt to the channel state in an online fashion, while the models in~\cite{Zhang_2011,Ahmadipour2022Information} only allow for an offline adaptation based on a target rate/distortion pair. Neither model supersedes the other and both capture scenarios that could be relevant to next generation communication networks.
\end{remark}

\subsection{Performance Evaluation Metrics}
The performance of the system is measured in terms of the asymptotic rate of reliable communication and asymptotic detection-error exponent. Formally, we define the communication-error probability and the detection-error probability for both mono-static and bi-static cases as follows
\begin{itemize}
    \item Mono-static case:
        \begin{align}
    P_{\text{c}}^{(n)} &\triangleq \max_{s\in\calS}\max_{w\in[1;M]}\label{eq:mono-pc} \mathbb{P}(h^{(m)}(Y^n)\neq w|W=w,S=s),\\
     P_{\text{d}}^{(n)} &\eqdef \max_{s\in\calS}\max_{w\in\intseq{1}{M}} \mathbb{P}(g^{(m)}(X^n,Z^n)\neq s|W=w,S=s).\label{eq:mono-pd}
     \end{align}
     \item Bi-static case:
    \begin{align}
        P_c^{(n)} &\triangleq \max_{s\in\calS}\max_{w\in[1;M]}\label{eq:bi-pc} \mathbb{P}(h^{(b)}(Y^n)\neq w|W=w,S=s),\\
        P_{\text{d}}^{(n)} &\eqdef \max_{s\in\calS}\max_{w\in[1;M]} \mathbb{P}(g^{(b)}(Y^n)\neq s|W=w,S=s).\label{eq:bi-pd}
    \end{align}

\end{itemize}
The rate and the detection-error exponent for both mono-static and bi-static models are
\begin{align*}
    R^{(n)}&\eqdef\frac{1}{n}\log M\text{ and }
    E_{\text{d}}^{(n)} \triangleq -\frac{1}{n} \log P_{\text{d}}^{(n)},
\end{align*} 
respectively.
\begin{definition}[Achievability]
A rate/detection-error exponent $(R,E)$ is achievable in the mono-static (bi-static) joint communication model if for any $s\in\calS$ and any $\epsilon>0$, there exist a sufficiently large $n$ and a code $\calC^{(m)}$ ($\calC^{(b)}$) of length $n$ such that
\begin{align}
    P_{\textnormal{c}}^{(n)} &\leq \epsilon, \\
    E_{\textnormal{d}}^{(n)} &\geq E-\epsilon,\\
    R^{(n)} &\geq R-\epsilon.
\end{align}
\end{definition}
When the encoder does not perform any online adaptation so that $f_i^{(m)}:[1;M]\mapsto \mathcal{X}$ is independent of the observation $Z^{i-1}$, the scheme is called \emph{open-loop}. On the other hand, if the encoder utilizes feedback information, the scheme is called \emph{closed-loop}. 
Our objective is to characterize the set of all achievable rate/detection-error exponent pairs by open-loop strategies in both mono-static and bi-static models as well as the set of all achievable rate/detection-error exponent pairs by closed-loop strategies in the mono-static model.
\begin{definition} We define $\mathbf{C}_{\textnormal{open}}^{(m)}$ and $\mathbf{C}^{(b)}$ as the closure of all achievable rate/detection-error exponent pairs by open strategies in the mono-static and bi-static model, respectively. Similarly, we define $\mathbf{C}_{\textnormal{close}}^{(m)}$ as the closure of all achievable rate/detection-error exponent pairs by closed-loop strategies in the mono-static model.

\end{definition}



\section{Main results}
\label{sec:main-results}
\subsection{Mono-Static Model}
\label{sec:mono-static-result}

We first restrict ourselves to open-loop schemes, which provide a baseline for assessing the usefulness of adaptation. For simplicity, we denote in this case the encoder that maps a message $w$ to a codeword of $n$ symbols by ${f}^{(m)}:[1;M]\mapsto \calX^n$. The following theorem provides an exact characterization of $\mathbf{C}_{\text{open}}^{(m)}$.

\begin{theorem}
  \label{thm:mono-capacity}
  The closure of all achievable joint communication and sensing rate/detection error exponent pairs for mono-static open-loop schemes is
\begin{align}
\mathbf{C}_{\textnormal{open}}^{(m)} &= \bigcup_{P_X\in\mathcal{P}_{\calX}}\left\{\begin{array}{l}(R,E)\in\bbR_+^2:\\
R\leq \min_{s\in\calS} \mathbb{I}(P_X,W_{Y|X,s}) \label{eqn:thm1_1}\\
E\leq \phi(P_X)
\end{array}\right\}
\end{align}
where 
\begin{small}
\begin{align}
    \phi(P_X) &= \min_{s\in\calS}\min_{s'\neq s}\max_{\ell\in [0,1]} -\sum_{x}P_X(x)\log \left(\sum_{z} W_{Z|X,s}(z|x)^{\ell}W_{Z|X,s'}(z|x)^{1-\ell}\right).\label{eq:phi}
\end{align}
\end{small}
\end{theorem}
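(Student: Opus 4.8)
The plan is to prove Theorem~\ref{thm:mono-capacity} by the two inclusions: an achievability part showing every $(R,E)$ with $R<\min_{s}\mathbb{I}(P_X,W_{Y|X,s})$ and $E<\phi(P_X)$ for some $P_X\in\mathcal{P}_{\calX}$ is achievable, and a converse showing every achievable $(R,E)$ lies in the stated union. Since $P_X\mapsto\min_{s}\mathbb{I}(P_X,W_{Y|X,s})$ and $P_X\mapsto\phi(P_X)$ are continuous on the compact set $\mathcal{P}_{\calX}$, the stated union is closed and its boundary points are limits of interior ones, so matching strict-inequality bounds on the two sides suffice.

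\emph{Achievability.} Fix $P_X$ and $(R,E)$ with $R<\min_{s}\mathbb{I}(P_X,W_{Y|X,s})$ and $E<\phi(P_X)$. I would use a constant-composition random code: draw $M=\lceil e^{nR}\rceil$ codewords independently and uniformly from a type class $\mathcal{T}_{P_n}^n$ with $P_n\in\mathcal{P}_{\calX}^n$ within $O(1/n)$ of $P_X$. For communication, use a universal (maximum-mutual-information) decoder, which needs no knowledge of $s$ and, by the classical random-coding bound for compound DMCs, makes the codeword-averaged error $\mathbb{P}(h^{(m)}(Y^n)\neq w\mid W=w,S=s)$ decay exponentially for every $s$ simultaneously once $R<\mathbb{I}(P_n,W_{Y|X,s})$ for all $s$, which holds for large $n$. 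For sensing, the estimator is handed $\bfx_w$, so it faces an $|\calS|$-ary test among the product laws $Z^n\sim\prod_i W_{Z|X,s}(\cdot\mid x_{w,i})$; I would use the maximum-likelihood test and bound its error by a union bound over the $\binom{|\calS|}{2}$ pairs, each term controlled by the Bhattacharyya/Chernoff bound, yielding error $\le(|\calS|-1)\,e^{-n\phi(P_n)}$ --- the same bound for \emph{every} codeword because the code is constant-composition. A standard expurgation discarding the (at most a $1-1/(|\calS|+1)$ fraction of) codewords that are bad for some state converts the average communication error into a worst-case one at vanishing cost in rate while preserving the common composition, hence preserving the detection exponent. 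Letting $n\to\infty$ and then $P_n\to P_X$ gives achievability.

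\emph{Converse.} Consider a sequence of codes with $P_c^{(n)}\to0$, $R^{(n)}\to R$, $E_d^{(n)}\ge E-o(1)$. By pigeonhole over the at most $(n+1)^{|\calX|}$ types there is a type $P^{(n)}$ shared by at least $M/(n+1)^{|\calX|}$ codewords; restricting to this sub-codebook leaves the rate unchanged to first exponential order and does not increase either error probability. For the rate, I would apply Fano's inequality to the sub-codebook for each $s$, single-letterize in the usual way using the Markov chain $W-X^n-Y^n$ (valid since the open-loop encoder makes $X^n$ a deterministic function of $W$) and the memorylessness of $W_{Y|XS}$, and use concavity of $P_X\mapsto\mathbb{I}(P_X,W_{Y|X,s})$; because all retained codewords have type exactly $P^{(n)}$, the average input type is $P^{(n)}$, so $R^{(n)}-o(1)\le\min_s\mathbb{I}(P^{(n)},W_{Y|X,s})$. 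For the exponent, fix any retained codeword $\bfx$ and any estimator; reducing the estimator to the binary test between the pair $(s^\star,s'^\star)$ attaining $\phi(P^{(n)})$ gives $P_d^{(n)}\ge\max(\alpha_n,\beta_n)$ for the induced type-I/type-II errors, and I would lower bound $\alpha_n+\beta_n$ by changing measure to the optimally $\ell^\star$-tilted product law $Q$ and invoking Chebyshev's inequality: $Q$ puts constant mass on a set where both log-likelihood ratios lie within $o(n)$ of $-n\phi(P^{(n)})$, so $P_d^{(n)}\ge\tfrac14 e^{-n\phi(P^{(n)})-o(n)}$, i.e., $E\le\phi(P^{(n)})+o(1)$. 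Finally, passing to a subsequence with $P^{(n)}\to P_X^\star$ (compactness of $\mathcal{P}_{\calX}$) and using continuity of $P_X\mapsto\min_s\mathbb{I}(P_X,W_{Y|X,s})$ and of $\phi$ (the latter from Berge's maximum theorem applied to the inner $\max_\ell$ of a jointly continuous objective, then the finite $\min$ over state pairs) yields $R\le\min_s\mathbb{I}(P_X^\star,W_{Y|X,s})$ and $E\le\phi(P_X^\star)$, so $(R,E)\in\mathbf{C}_{\textnormal{open}}^{(m)}$.

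\emph{Main obstacle.} The delicate step is the lower bound in the detection-exponent converse. On the achievability side the Chernoff bound is a one-line union-bound argument, but showing that \emph{no} estimator --- not even one that knows the codeword --- can beat the worst-pair Chernoff exponent requires the tight converse direction of Chernoff's theorem, i.e., the change-of-measure-plus-concentration argument at the optimally tilted law, which here must be run for a product law that is only piecewise-i.i.d.\ across the $|\calX|$ input symbols. A further subtlety is that the rate and exponent bounds cannot be combined at the \emph{average} codeword type, because $\phi$, being a pointwise minimum of convex functions, is in general neither convex nor concave; the reduction to a single dominant type is precisely what forces both bounds to refer to the same $P_X^\star$.
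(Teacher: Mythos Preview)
Your proposal is correct and follows essentially the same route as the paper: constant-composition codes with universal decoding plus the ML state test (Chernoff bound) for achievability, and type-pigeonhole plus Fano plus the binary-test Chernoff lower bound for the converse. The paper merely packages both directions of the Chernoff exponent into a lemma adapted from Nitinawarat et al.\ (2013) and invokes Csisz\'ar--K\"orner directly for a deterministic compound-channel constant-composition code instead of random-coding-plus-expurgation, but the skeleton---including the reduction to a single dominant type so that the rate and exponent constraints refer to the same $P_X$---is identical.
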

\begin{proof}
  See Section~\ref{sec:achievability} and Section~\ref{sec:converse}.
\end{proof}
A couple of comments are in order. First, since open-loop schemes do not exploit the information about the state contained in past noisy observations of the state, achievable rates are necessarily upper bounded by the compound channel capacity~$\max_{P_X}\min_{s\in\calS} \mathbb{I}(P_X,W_{Y|X,s})$. This is a weakness of all open-loop schemes. Second, because of the open-loop nature of the coding schemes, the interplay between communication and sensing is captured by the choice of a distribution $P_X$ that governs the empirical statistics of the codewords and is set \emph{offline}. This is similar to what is obtained in other information-theoretic approaches based on rate-distortion~\cite{Zhang_2011,Ahmadipour2022Information}. 

The results in \cite{Joudeh2022Joint} are also special cases of  Theorem~\ref{thm:mono-capacity}. In \cite{Joudeh2022Joint}, the authors consider a mono-static joint communication and sensing model in which $W_{YZ|X,S}=W_{Y|X}W_{Z|X,S}$, i.e., the communication channel is irrelevant to the state. One of the channel models in \cite{Joudeh2022Joint} is a binary setting in which $\calX=\calZ=\calS=\{0,1\}$ and  $W_{Z|X,S}=W_{Z|X\cdot S}$, i.e., at each time $t$ the state-estimator obtains $Z_t = X_t \cdot S \oplus N_t$, where $\oplus$ denotes the modulo-2 sum and $N_t\sim \text{Ber}(q)$ for some $0<q<1$. By specializing Theorem~\ref{thm:mono-capacity}, one recovers \cite[Theorem 1]{Joudeh2022Joint} as follows.
    
\begin{corollary}
Let $\calX=\calY=\calZ=\calS=\{0,1\}$. 
When the mono-static joint communication model satisfies $W_{YZ|XS}=W_{Y|X}W_{Z|X\cdot S}$, where $W_{Y|X}$ and $W_{Z|X\cdot S}$ are  binary symmetric channels with cross over probability $p$ and $q$, respectively, then 
\begin{align}
    \mathbf{C}_{\textnormal{open}}^{(m)} = \bigcup_{\alpha\in[0.5,1]} \left\{\begin{array}{l}(R,E)\in\bbR_+^2:\\
R\leq \bbH(\textnormal{Ber}(\alpha * p)) - \bbH(\textnormal{Ber}(p))\\
E\leq \alpha \bbD\left(\textnormal{Ber}(0.5)||\textnormal{Ber}(q)\right),
\end{array}\right\}
\end{align}
where $\alpha*p = \alpha(1-p)+(1-\alpha)*p$. 
\label{cor:3}
\end{corollary}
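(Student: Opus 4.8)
The plan is to specialize Theorem~\ref{thm:mono-capacity} to the stated binary model. Since $\calX=\{0,1\}$, every input distribution is of the form $P_X=\textnormal{Ber}(\beta)$ with $\beta\in[0,1]$, so the union defining $\mathbf{C}_{\textnormal{open}}^{(m)}$ is taken over this single parameter. For the rate constraint, the factorization $W_{YZ|XS}=W_{Y|X}W_{Z|X\cdot S}$ means that the law of $Y$ given $X$ does not involve $S$, so $\min_{s\in\calS}\mathbb{I}(P_X,W_{Y|X,s})=\mathbb{I}(\textnormal{Ber}(\beta),W_{Y|X})$; writing $Y=X\oplus N$ with $N\sim\textnormal{Ber}(p)$ independent of $X$ gives $Y\sim\textnormal{Ber}(\beta*p)$ and hence $\mathbb{I}(\textnormal{Ber}(\beta),W_{Y|X})=\bbH(\textnormal{Ber}(\beta*p))-\bbH(\textnormal{Ber}(p))$, which is the first inequality in the claimed region.

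The main work is evaluating $\phi(P_X)$ from~\eqref{eq:phi}. With $\calS=\{0,1\}$ the double minimum collapses to the single pair $(s,s')=(0,1)$, the pair $(1,0)$ giving the same value via $\ell\mapsto 1-\ell$. I would then write out the sensing channel: since $x\cdot 0=0$, we have $W_{Z|X,0}(\cdot\mid x)=\textnormal{Ber}(q)$ for both values of $x$, whereas $W_{Z|X,1}(\cdot\mid x)$ is the $\textnormal{BSC}(q)$ output on input $x$. Consequently the two laws coincide when $x=0$, so the $x=0$ term in~\eqref{eq:phi} is $-\log 1=0$, while for $x=1$ they are $\textnormal{Ber}(q)$ and $\textnormal{Ber}(1-q)$. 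Hence
\[
\phi(\textnormal{Ber}(\beta))=\beta\max_{\ell\in[0,1]}\Bigl(-\log\bigl((1-q)^{\ell}q^{1-\ell}+q^{\ell}(1-q)^{1-\ell}\bigr)\Bigr).
\]
The function $\ell\mapsto(1-q)^{\ell}q^{1-\ell}+q^{\ell}(1-q)^{1-\ell}$ is a sum of two exponentials in $\ell$, hence convex, and it is invariant under $\ell\mapsto 1-\ell$, so it attains its minimum at $\ell=\tfrac12$; evaluating there gives $2\sqrt{q(1-q)}$, and $-\log\bigl(2\sqrt{q(1-q)}\bigr)=\bbD\bigl(\textnormal{Ber}(0.5)\|\textnormal{Ber}(q)\bigr)$ by a direct expansion of the relative entropy. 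Therefore $\phi(\textnormal{Ber}(\beta))=\beta\,\bbD\bigl(\textnormal{Ber}(0.5)\|\textnormal{Ber}(q)\bigr)$.

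It remains to reduce the union over $\beta\in[0,1]$ to a union over $\alpha\in[0.5,1]$. Given $\beta\in[0,0.5)$, put $\alpha=1-\beta\in(0.5,1]$; since $(1-\beta)*p=1-(\beta*p)$ and $\bbH(\textnormal{Ber}(\cdot))$ is symmetric about $\tfrac12$, the rate bound for $\alpha$ coincides with that for $\beta$, while $\alpha\ge\beta$ and $\bbD\bigl(\textnormal{Ber}(0.5)\|\textnormal{Ber}(q)\bigr)\ge0$ imply the detection-exponent bound for $\alpha$ is at least that for $\beta$. Thus the rectangle associated with $\beta$ is contained in the one associated with $\alpha$, so the union over $[0,1]$ equals the union over $[0.5,1]$, which is precisely the stated expression for $\mathbf{C}_{\textnormal{open}}^{(m)}$. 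I expect the only nontrivial points to be the optimality of $\ell=\tfrac12$ (handled by convexity plus symmetry) and this last monotonicity argument collapsing the union; the rest is a direct substitution into Theorem~\ref{thm:mono-capacity}.
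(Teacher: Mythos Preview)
Your proof is correct and follows essentially the same approach as the paper: both specialize Theorem~\ref{thm:mono-capacity} by computing the rate as $\bbH(\textnormal{Ber}(\beta*p))-\bbH(\textnormal{Ber}(p))$ and the exponent as $P_X(1)\,\bbD(\textnormal{Ber}(0.5)\|\textnormal{Ber}(q))$ via the observation that the $x=0$ term vanishes and the optimum is at $\ell=\tfrac12$. You are actually more careful than the paper on two points it leaves implicit: you justify $\ell=\tfrac12$ by convexity plus the $\ell\mapsto 1-\ell$ symmetry (the paper simply asserts the maximizer), and you explicitly argue why the union over $\beta\in[0,1]$ collapses to $\alpha\in[0.5,1]$, which the paper omits entirely.
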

\begin{proof}
Since, $W_{YZ|XS}=W_{Y|X}W_{Z|X\cdot S}$, the achievable rate is irrelevant to the state, and for each $P_X\sim\textnormal{Ber}(\alpha)$ we have
\begin{align*}
    \min_{s\in\calS} \mathbb{I}(P_X,W_{Y|X,s}) &= \mathbb{I}(P_X,W_{Y|X}) \\
    &= \bbH(P_X\circ W_{Y|X}) - \bbH(W_{Y|X}|P_X) \\
    &=  \bbH(\text{Ber}(\alpha\ast p)) - \bbH(\text{Ber}(p)).
\end{align*}
Moreover, for such $P_X$, $\phi(P_X)$ can be calculated as follows. 
\begin{align}
    \phi(P_X) &= \min_{s\in\calS}\min_{s'\neq s}\max_{\ell\in [0,1]} -\sum_{x}P_X(x)\times\log \left(\sum_{z} W_{Z|X,s}(z|x)^{\ell}W_{Z|X,s'}(z|x)^{1-\ell}\right)
    \nonumber\\
    &= \max_{\ell\in[0,1]}-P_X(1)\log \left(q^{\ell}(1-q)^{1-\ell}+(1-q)^{\ell}q^{1-\ell}\right)\label{enq:cor4_2}\\
    &= -P_X(1)\log \left(q^{1/2}(1-q)^{1/2}+(1-q)^{1/2}q^{1/2}\right)\label{enq:cor4_3}\\
    &= \alpha \bbD\left(\text{Ber}(0.5)||\text{Ber}(q)\right) \nonumber
    ,
\end{align}
where (\ref{enq:cor4_3}) follows from the fact that $\ell=1/2$ maximizes \eqref{enq:cor4_2}. The corollary follows by taking the union over all $P_X$. 
\end{proof}

\begin{remark}
  Note that in the setting considered in Corollary~\ref{cor:3}, transmitting $X=0$ does not help the performance of the state-estimation, and the detection-error exponent is a monotonously increasing function of the weight of the codeword. 
\end{remark}

We also observe that in some cases, there is no tradeoff between maximizing the communication capacity and the detection-error exponent.
\begin{corollary}
If there exists $x_0\in\calX$ such that for all $x\in\calX$ there exists a permutation $\pi_x$ on $\calZ$ such that for every $s\in\calS$
\begin{align}
    W_{Z|X, s}(z|x)=W_{Z|X, s}(\pi_x(z)|x_0),
\end{align}
then 
\begin{align}
\mathbf{C}_{\textnormal{open}}^{(m)} &= \left\{\begin{array}{l}(R,E)\in\bbR_+^2:\\
R\leq \max_{P_X}\min_{s\in\calS} \mathbb{I}(P_X,W_{Y|X,s})\\
E\leq \max_{P_X}\phi(P_X)
\end{array}\right\}
\end{align}
where $\phi(\cdot)$ is defined in~\eqref{eq:phi}.
 \label{cor:1}
\end{corollary}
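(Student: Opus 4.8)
The plan is to exploit the hypothesis to show that the detection-error exponent $\phi(P_X)$ of~\eqref{eq:phi} does not actually depend on $P_X$; once this is established, the union over $P_X$ in Theorem~\ref{thm:mono-capacity} collapses to a single rectangle, which is precisely the claimed region.

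First I would fix a pair $s\neq s'$ and a parameter $\ell\in[0,1]$ and look at the inner Chernoff bracket $\sum_{z} W_{Z|X,s}(z|x)^{\ell}W_{Z|X,s'}(z|x)^{1-\ell}$. Substituting the hypothesis $W_{Z|X,s}(z|x)=W_{Z|X,s}(\pi_x(z)|x_0)$ — which holds with the \emph{same} permutation $\pi_x$ for both $s$ and $s'$ — and re-indexing the sum by the bijection $z'=\pi_x(z)$ gives
\[
\sum_{z} W_{Z|X,s}(z|x)^{\ell}W_{Z|X,s'}(z|x)^{1-\ell}=\sum_{z'} W_{Z|X,s}(z'|x_0)^{\ell}W_{Z|X,s'}(z'|x_0)^{1-\ell},
\]
which is independent of $x$. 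Hence $-\log$ of this bracket is a constant in $x$, and since $\sum_x P_X(x)=1$, the weighted sum $-\sum_x P_X(x)\log(\cdot)$ equals that same constant for every $P_X$. Taking $\max_\ell$ and then $\min_{s}\min_{s'\neq s}$ shows $\phi(P_X)=\phi^\star$ for all $P_X\in\mathcal{P}_\calX$, where $\phi^\star$ is the value obtained from $x_0$ alone; in particular $\phi^\star=\max_{P_X}\phi(P_X)$.

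It then remains to substitute $\phi(P_X)=\phi^\star$ into Theorem~\ref{thm:mono-capacity}: each term of the union becomes the rectangle $[0,\min_{s\in\calS}\mathbb{I}(P_X,W_{Y|X,s})]\times[0,\phi^\star]$, and a union of rectangles sharing the common $E$-edge $[0,\phi^\star]$ is the rectangle whose $R$-edge is $\bigcup_{P_X}[0,\min_{s}\mathbb{I}(P_X,W_{Y|X,s})]$. The only point requiring a little care is to argue that this last union is the closed interval $[0,\max_{P_X}\min_{s}\mathbb{I}(P_X,W_{Y|X,s})]$, i.e. that the maximum is attained; this follows from compactness of $\mathcal{P}_\calX$ together with continuity of $P_X\mapsto\min_{s\in\calS}\mathbb{I}(P_X,W_{Y|X,s})$ as a finite minimum of continuous mutual-information functionals. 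Combining the two edges yields the stated region. I do not anticipate a genuine obstacle: the hypothesis says exactly that every input letter is equally useful for discriminating the states, so sensing imposes no constraint on the choice of $P_X$ and the rate and the exponent decouple; the only real work is the Chernoff-bracket bookkeeping above and the mild topological remark just made.
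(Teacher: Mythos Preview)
Your proposal is correct and follows essentially the same approach as the paper: both show via the permutation re-indexing that the Chernoff bracket $\sum_z W_{Z|X,s}(z|x)^{\ell}W_{Z|X,s'}(z|x)^{1-\ell}$ is independent of $x$, whence $\phi(P_X)$ is constant in $P_X$. Your write-up is in fact a bit more explicit than the paper's, which stops after the bracket computation and simply states that the exponent is invariant to the input type; your additional remarks about collapsing the union in Theorem~\ref{thm:mono-capacity} to a rectangle and invoking compactness/continuity for attainment of the max are correct and fill in details the paper leaves implicit.
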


\begin{proof}
  For every $x\in\calX\backslash\{x_0\}$,
  \begin{align}
    \sum_{z\in\calZ} W_{Z|X, s}(z|x)^\ell W_{Z|X, s^\prime}(z|x)^{1-\ell}&=\sum_{z\in\calZ} W_{Z|X, s}(\pi_x(z)|x_0)^\ell W_{Z|X, s^\prime}(\pi_x(z)|x_0)^{1-\ell}\\
    &=\sum_{\pi^{-1}_{x}(z\prime)\in\calZ}W_{Z|X,s}(z^\prime|x_0)^\ell W_{Z|X,s^\prime}(z^\prime|x_0)^{1-\ell}\\
    &=\sum_{z\in\calZ}W_{Z|X,s}(z|x_0)^\ell W_{Z|X,s^\prime}(z|x_0)^{1-\ell}.
  \end{align}
Thus, we know that the detection-error exponent is invariant to the input type under this scenario.
\end{proof}
In other words, when the channel satisfies certain symmetry conditions, there is no tradeoff between rate and detection-error exponent and one simultaneously achieves the optimal communication rate and the optimal detection performance.
One of the compound channel families that falls into such a category is the set of \acp{BSC}. The maximal detection-error exponent and the compound capacity are then simultaneously  achieved with a uniform input distribution.\smallskip

We now turn our attention back to the characterization of \emph{closed-loop schemes}, which exploit the feedback to adapt to the state. The next theorem characterizes an inner bound of the set $\mathbf{C}^{(m)}_{\text{closed}}$. 

\begin{theorem}
  The closure of all achievable joint communication and sensing rate/detection error exponent pairs for mono-static closed-loop schemes satisfies
  \begin{align}
    \mathbf{C}^{(m)}_{\textnormal{closed}}  \supseteq\bigcup_{\{P_{X,s''}\}_{s''\in\mathcal{S}}\in\left(\mathcal{P}_{\calX}\right)^{|\mathcal{S}|}} 
  \left\{\begin{array}{l}
           (R,E)\in \bbR_+^2: \\
           R \leq  \min_{s\in\mathcal{S}}\mathbb{I}(P_{X,s},W_{Y|X,s})\\
           E \leq \min_{s\in\mathcal{S}}\phi(P_{X,s})
         \end{array}
  \right\}
\end{align}
where 
the notation $\bigcup_{\{P_{X,s''}\}_{s''\in\mathcal{S}}\in\left(\mathcal{P}_{\calX}\right)^{|\mathcal{S}|}}$ means that we are taking the union over all possible $|\mathcal{S}|$-tuples of probability distributions in $\mathcal{P}_{\mathcal{X}}$ and $\left(\mathcal{P}_{\mathcal{X}}\right)^{ |\mathcal{S}|}$ is the set of tuples of $|\mathcal{S}|$ elements in $\mathcal{P}_{\mathcal{X}}$.  
\label{thm:2}
\end{theorem}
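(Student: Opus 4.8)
The plan is to prove the inclusion by achievability, using a three‑phase training‑based closed‑loop scheme. Fix a tuple $\{P_{X,s''}\}_{s''\in\mathcal{S}}$ and a target pair $(R,E)$ with $R\leq\min_{s}\mathbb{I}(P_{X,s},W_{Y|X,s})$ and $E\leq\min_{s}\phi(P_{X,s})$, and fix small parameters $\delta_1,\delta_2>0$. The first $\lceil\delta_1 n\rceil$ channel uses form a \emph{sensing phase}: the transmitter sends a fixed pilot whose composition $Q$ is chosen to have strictly positive Chernoff information between $W_{Z|X,s}$ and $W_{Z|X,s'}$ for every pair $s\neq s'$ (such $Q$ exists by the standing assumption, if necessary by time‑sharing a pilot over the pair‑optimal input distributions), and from the outputs $Z^{\lceil\delta_1 n\rceil}$ it forms the maximum‑likelihood estimate $\hat s$ of the state; positivity of the Chernoff information gives $\mathbb{P}(\hat s\neq s\mid S=s)\to 0$. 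The next $\lceil\delta_2 n\rceil$ uses form an \emph{announcement phase}, needed because the receiver sees neither the sensing output nor the codebook index: the transmitter reliably conveys $\hat s$ (a bounded number of bits) to the receiver with a constant‑composition code for the compound channel $\{W_{Y|X,s}\}_{s}$. Here I will use that this compound channel has positive capacity whenever $R>0$, since the mixture input $\bar P\eqdef\frac1{|\mathcal{S}|}\sum_s P_{X,s}$ satisfies, by concavity of mutual information in the input, $\mathbb{I}(\bar P,W_{Y|X,s'})\geq\frac1{|\mathcal{S}|}\mathbb{I}(P_{X,s'},W_{Y|X,s'})>0$ for every $s'$ (and if $R=0$ this phase is simply omitted). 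The remaining $n-\lceil\delta_1 n\rceil-\lceil\delta_2 n\rceil$ uses form the \emph{communication phase}: the transmitter sends $W$ using a constant‑composition code $\mathcal{C}_{\hat s}$ of composition close to $P_{X,\hat s}$ that is capacity‑achieving for $W_{Y|X,\hat s}$, and the receiver decodes with $\mathcal{C}_{\hat s_R}$, where $\hat s_R$ is its decoded version of $\hat s$. Finally $g^{(m)}$, which sees $(X^n,Z^n)$, runs the ML test $\hat s_{\mathrm{final}}=\argmax_{s''}\prod_{i=1}^n W_{Z|X,s''}(Z_i\mid X_i)$.

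For the communication error, conditioned on the event $\hat s=s$ and $\hat s_R=s$ the communication phase is a matched constant‑composition transmission over $W_{Y|X,s}$ at rate $R/(1-\delta_1-\delta_2)$, which has vanishing error as long as $R<(1-\delta_1-\delta_2)\,\mathbb{I}(P_{X,s},W_{Y|X,s})$. Since $\mathbb{P}(\hat s\neq s\mid S=s)$ and the announcement‑phase error probability both vanish, a union bound gives $P_{\textnormal c}^{(n)}\to 0$ provided $R<(1-\delta_1-\delta_2)\min_{s}\mathbb{I}(P_{X,s},W_{Y|X,s})$, which is standard modulo the usual expurgation to pass from average to maximal error.

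For the detection error the key observation, which I expect to be the crux of the argument, is that the ML test's exponent is governed purely by the \emph{type} of $X^n$, which with constant‑composition codes throughout equals $\delta_1^{(n)}Q_n+\delta_2^{(n)}\bar P_n+(1-\delta_1^{(n)}-\delta_2^{(n)})P^{(n)}_{X,\hat s}$ up to $O(1/n)$, \emph{no matter which value $\hat s$ takes}. A Chernoff bound on each pairwise event $\{\hat s_{\mathrm{final}}=s''\}$ under $S=s$ together with a union bound over $s''\neq s$ shows that the correct‑detection exponent is at least $\min_{s''\neq s}\max_{\ell\in[0,1]}\bigl(-\sum_x \hat p_{X^n}(x)\log\sum_z W_{Z|X,s}(z\mid x)^{\ell}W_{Z|X,s''}(z\mid x)^{1-\ell}\bigr)$; since the per‑symbol quantity $-\sum_x P(x)\log\sum_z W_{Z|X,s}(z\mid x)^{\ell}W_{Z|X,s''}(z\mid x)^{1-\ell}$ is nonnegative (Hölder) and affine in $P$, dropping the sensing‑ and announcement‑phase contributions lower‑bounds this exponent by $(1-\delta_1-\delta_2)$ times the same quantity computed with $P_{X,\hat s}$, which is at least $(1-\delta_1-\delta_2)\,\phi(P_{X,\hat s})\geq(1-\delta_1-\delta_2)\min_{s''\in\mathcal{S}}\phi(P_{X,s''})$ — uniformly over the realized $\hat s$. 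Averaging over the codebook and over $\hat s$ and then maximizing over $s\in\mathcal{S}$ yields $E_{\textnormal d}^{(n)}\geq(1-\delta_1-\delta_2)\min_{s''}\phi(P_{X,s''})-o(1)$.

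The main obstacle is exactly the one handled in the previous paragraph: because the transmitter commits to the codebook $\mathcal{C}_{\hat s}$ (hence to the codeword type $P_{X,\hat s}$) before it is certain of the state, one must rule out that the rare mis‑identification event $\{\hat s\neq s\}$ destroys the detection exponent; the saving fact is that the correct‑detection exponent for \emph{any} input type $P$ and \emph{any} true state $s$ equals $\min_{s'\neq s}\max_{\ell}(\cdots)\geq\phi(P)$, so selecting all codebook compositions from the chosen tuple guarantees the worst case is still at least $\min_{s''}\phi(P_{X,s''})$. A secondary point needing care is the bookkeeping of the three phase lengths so that the factors $(1-\delta_1-\delta_2)$ on both the rate and the exponent can be driven to $1$ while the sensing and announcement phases remain long enough for their (merely vanishing, not exponentially small) error probabilities to go to zero. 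Letting $\delta_1,\delta_2\downarrow 0$ through a sequence of codes and taking the closure of the union over all tuples $\{P_{X,s''}\}_{s''\in\mathcal{S}}$ then gives the claimed inner bound on $\mathbf{C}^{(m)}_{\textnormal{closed}}$.
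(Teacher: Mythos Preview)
Your three-phase scheme (sense, announce, communicate) is exactly the paper's approach, and your handling of the communication error and of the key observation that the detection exponent under mis-identification $\hat s\neq s$ is still at least $\min_{s''}\phi(P_{X,s''})$ are both on target. There is, however, a small but real technical slip in the detection-error step. You let $g^{(m)}$ run the ML test on the \emph{full} pair $(X^n,Z^n)$ and then assert that ``the ML test's exponent is governed purely by the type of $X^n$,'' so that dropping the phase-1 and phase-2 contributions is a legitimate lower bound. That assertion holds when $X^n$ is fixed in advance (open loop), but here $X_{\lceil\delta_1 n\rceil+1}^n$ is a function of $Z^{\lceil\delta_1 n\rceil}$ through $\hat s$; conditioning on the realized $\hat s$ (equivalently on $X^n$) destroys the product law of $Z^{\lceil\delta_1 n\rceil}$. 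Concretely, the conditional Chernoff moment $\mathbb{E}\bigl[\prod_{i\le\lceil\delta_1 n\rceil}\bigl(W_{Z|X,s''}(Z_i\mid X_i)/W_{Z|X,s}(Z_i\mid X_i)\bigr)^\ell\,\big|\,\hat s\bigr]$ is in general \emph{not} bounded by $1$ --- indeed it is $\geq 1$ on the event $\hat s=s''$, precisely the rare event you are trying to absorb --- so you cannot simply ``drop'' the phase-1 term after conditioning.

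The paper sidesteps this by defining $g^{(m)}$ to use \emph{only} the communication-phase data $(X_{\lceil(\delta_1+\delta_2)n\rceil+1}^n,Z_{\lceil(\delta_1+\delta_2)n\rceil+1}^n)$. Conditioned on $\hat s=\tilde s$, these observations are genuinely product-form with input type $P_{X,\tilde s}$ and are independent of the phase-1 feedback, so Lemma~\ref{lem:1} gives exponent $\psi_s(P_{X,\tilde s})\geq\phi(P_{X,\tilde s})$ directly; summing $\mathbb{P}(\hat s=\tilde s\mid S=s)\,e^{-n(1-\delta_1-\delta_2)\psi_s(P_{X,\tilde s})}$ over $\tilde s$ then yields $(1-\delta_1-\delta_2)\min_{\tilde s}\phi(P_{X,\tilde s})$ exactly as claimed. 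Your argument becomes correct with this one-line change to $g^{(m)}$. (Incidentally, your explicit check that the announcement-phase compound channel has positive capacity whenever $R>0$, via concavity of mutual information in the input, is a nice point the paper glosses over.)
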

Theorem~\ref{thm:2} is obtained by considering a simple strategy in which the transmitter learns the state, informs the receiver, and uses a code adapted to the learned channel state. The exact characterization of the optimal tradeoffs for closed-loop schemes remains elusive and presents non-trivial challenges, chief among them the absence of a known optimal detection error-exponent for multi-hypothesis controlled sensing~\cite{Nitinawarat2013}. One can conclude that the maximal achievable detection-error exponent characterized by Theorem~\ref{thm:2} is identical to that of the open-loop strategy by observing the following equality 
\begin{align*}
    \max_{\{P_{X,s''}\}_{s''\in\mathcal{S}}\in\left(\mathcal{P}_{\calX}\right)^{|\mathcal{S}|}}  \min_{s\in\mathcal{S}} \phi(P_{X,s}) = \max_{P_X\in\mathcal{P}_{\mathcal{X}}} \phi(P_X).
\end{align*}

Therefore, the region characterized in Theorem~\ref{thm:2} is sub-optimal because it is already shown in \cite{Nitinawarat2013} that there exists a closed-loop method that achieves a better detection-error exponent than an open-loop scheme. However, the benefit of Theorem~\ref{thm:2} is in improving the communication rate from the compound channel capacity, i.e.,
$$\max_{P_X\in\mathcal{P}_{\mathcal{X}}}\min_{s\in\calS} \mathbb{I}(P_X,W_{Y|X,s})$$ 
to the worst-case capacity, i.e., $$\min_{s\in\calS}\max_{P_X\in\mathcal{P}_{\mathcal{X}}} \mathbb{I}(P_X,W_{Y|X,s}).$$
Note that the worst-case capacity is exactly the maximal achievable rate described in Theorem~\ref{thm:2} because
\begin{align*}
    \max_{\{P_{X,s''}\}_{s''\in\mathcal{S}}\in\left(\mathcal{P}_{\calX}\right)^{|\mathcal{S}|}} \min_{s\in\calS} \mathbb{I}(P_{X,s},W_{Y|X,s}) = \min_{s\in\calS}\max_{P_X\in\mathcal{P}_{\mathcal{X}}} \mathbb{I}(P_X,W_{Y|X,s}). 
\end{align*}
A similar technique exploiting the feedback to improve the capacity in a compound channel can be found in \cite{Shrader}. 
 
\begin{remark}
    The regions characterized in Theorem~\ref{thm:2} and Theorem 3 are identical when the condition in Corollary 5 is satisfied and the worst-case capacity is equal to the compound channel capacity.
\end{remark}

\subsection{Bi-Static Model}
\label{sec:bi-static-result}
In joint communication and sensing, codewords convey information and induce a codeword-dependent distribution from which observations are generated and are used to estimate the state. In the mono-static model, state estimation is performed at the transmitter, which has the access to the transmitted codeword, so that a likelihood-based state estimation can be performed with exact knowledge of the transmitted codeword. In the bi-static model, however, the transmitted codeword is unknown to the state estimator $g^{(b)}$.

A natural method to estimate the state is to use a successive approach, by which one first estimates the codeword and then performs a maximum likelihood estimation given the decoded codeword. Denoting by $\mathbf{C}^{(b)}$ the closure of all achievable regions in the bi-static model, our next theorem provides an inner bound of $\mathbf{C}^{(b)}$ based on successive schemes. 

\begin{theorem}
The closure region of all achievable joint communication and sensing rate/detection error exponent pairs for the bi-static joint communication and sensing model satisfies

\begin{align}
\mathbf{C}^{(b)} &\supseteq \mathcal{D}_{\textnormal{succ}} \triangleq \bigcup_{P_X\in\mathcal{P}_{\calX}}\left\{\begin{array}{l}(R,E)\in\bbR_+^2:\\
R\leq \min_{s\in\calS} \mathbb{I}(P_X,W_{Y|X,s})\\
E\leq \min\left(\rho_{\textnormal{succ}}(P_X,R), \phi(P_X)\right)
\end{array}\right\},
\end{align}

where 
\begin{align}
    \rho_{\textnormal{succ}}(P_X, R) \triangleq \min_{s\in\calS}\min_{\widehat{P}\in\calP_{\calY|\calX}}\left( \mathbb{D}(\widehat{P}\Vert W_{Y|X,s}|P_X)+\left|\mathbb{I}(P_X;\widehat{P})-R\right|^{+}\right). 
\end{align}
\label{thm:bi-static-result}
\end{theorem}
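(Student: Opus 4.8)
The plan is to prove achievability by random coding with a universal decoder followed by a per‑codeword maximum‑likelihood state estimator; this is exactly the \emph{successive} scheme named in the theorem. Fix $P_X\in\calP_{\calX}$ and a rate $R<\min_{s\in\calS}\mathbb{I}(P_X,W_{Y|X,s})$ (the boundary of $\mathcal{D}_{\textnormal{succ}}$ being recovered at the end from the closure). I would draw the $M=\lceil e^{nR}\rceil$ codewords $\{\mathbf{x}_w\}$ independently and uniformly from the constant‑composition set $\calT_{P_X}^n$ (or i.i.d.\ from $P_X^{\otimes n}$, then restrict to the dominant type class, losing only $o(1)$ in rate). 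The decoder $h^{(b)}$ is the maximum mutual information decoder, $h^{(b)}(\mathbf{y})\eqdef\argmax_{w}\mathbb{I}(\hat{p}_{\mathbf{x}_w},\hat{p}_{\mathbf{y}|\mathbf{x}_w})$, which is universal over $\calS$; the state estimator first recomputes $\hat{w}=h^{(b)}(\mathbf{y})$ and then outputs $\hat{s}\eqdef\argmax_{s\in\calS}W_{Y|X,s}^{\otimes n}(\mathbf{y}\mid\mathbf{x}_{\hat{w}})$, which is a legitimate map $\calY^n\to\calS$ as required in \eqref{eq:bi-pd}.

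For a fixed transmitted pair $(w,s)$, writing $\{w,s\}$ for the event $\{W=w,S=s\}$, I would split the detection error as
\begin{align*}
\mathbb{P}(g^{(b)}(Y^n)\neq s\mid w,s)&\leq \mathbb{P}(\hat{w}\neq w\mid w,s)\\
&\quad+\mathbb{P}(\hat{s}\neq s,\hat{w}=w\mid w,s).
\end{align*}
The first term is the decoding‑error probability of the compound channel: by the method of types it is, averaged over codebooks, at most $e^{-n(E_r(R,P_X,W_{Y|X,s})-o(1))}$ with $E_r(R,P_X,W)=\min_{\widehat{P}\in\calP_{\calY|\calX}}(\mathbb{D}(\widehat{P}\Vert W\mid P_X)+\abs{\mathbb{I}(P_X;\widehat{P})-R}^+)$; taking the worst state gives exactly the exponent $\rho_{\textnormal{succ}}(P_X,R)$ (and in particular $P_{\textnormal{c}}^{(n)}\to 0$). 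For the second term, on $\{\hat{w}=w\}$ the estimator sees the true codeword $\mathbf{x}_w\in\calT_{P_X}^n$ and $\mathbf{y}\sim W_{Y|X,s}^{\otimes n}(\cdot\mid\mathbf{x}_w)$; a pairwise Chernoff bound on the ML test between $s$ and each competitor $s'$, combined with $\hat{p}_{\mathbf{x}_w}=P_X$, gives per ordered pair $(s,s')$ a probability $\exp(-n\max_{\ell\in[0,1]}(-\sum_x P_X(x)\log\sum_z W_{Y|X,s}(z|x)^\ell W_{Y|X,s'}(z|x)^{1-\ell}))$, so after the union over $s'$ and the maximization over the true state the exponent is $\phi(P_X)$ of \eqref{eq:phi} (the sensing output being $Y$ itself in the bi‑static model). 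Hence the codebook‑averaged detection error for each $(w,s)$ is at most $e^{-n(\min(\rho_{\textnormal{succ}}(P_X,R),\phi(P_X))-o(1))}$, the $\abs{\calS}$ and polynomial factors being swept into the $o(1)$.

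To reach the worst‑case metrics in \eqref{eq:bi-pc}--\eqref{eq:bi-pd}, I would union‑bound over the finitely many true states (harmless for the exponent) and expurgate the worse half of the messages, at a rate loss $(\log 2)/n$: this yields a subcode on which $\max_{w,s}\mathbb{P}(\hat{w}\neq w\mid w,s)$ and $\max_{w,s}\mathbb{P}(\hat{s}\neq s\mid w,s)$ are both at most $e^{-n(\min(\rho_{\textnormal{succ}}(P_X,R),\phi(P_X))-o(1))}$. Consequently $P_{\textnormal{c}}^{(n)}\to 0$, $R^{(n)}\geq R-o(1)$, and $E_{\textnormal{d}}^{(n)}\geq\min(\rho_{\textnormal{succ}}(P_X,R),\phi(P_X))-o(1)$, so every $(R,E)$ with $R<\min_s\mathbb{I}(P_X,W_{Y|X,s})$ and $E<\min(\rho_{\textnormal{succ}}(P_X,R),\phi(P_X))$ is achievable. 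Taking the union over $P_X\in\calP_{\calX}$ and then the closure gives $\mathcal{D}_{\textnormal{succ}}\subseteq\mathbf{C}^{(b)}$, as claimed.

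The step I expect to be the main obstacle is the first term: one must commit to a single decoder that is good for \emph{every} state simultaneously (the universality of the MMI decoder) and verify that its random‑coding error exponent is precisely the Csisz\'{a}r--K\"{o}rner expression appearing inside $\rho_{\textnormal{succ}}(P_X,R)$, not a looser Gallager‑type bound, since the interaction of $R$ with the detection exponent is the whole content of the theorem. The remaining ingredients — the pairwise Chernoff bound yielding $\phi$, the type counting, and the expurgation — are routine. It is worth noting explicitly that the scheme is deliberately pessimistic, in that every decoding error is charged in full as a detection error; this is exactly why the two exponents combine through a minimum and why only an inner bound is obtained, a genuinely joint message/state detector being expected to do strictly better.
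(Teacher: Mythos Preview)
Your proposal is correct and follows essentially the same approach as the paper: the same two-term split of the detection error (bounding the conditional detection error by $1$ on the event of a decoding error), the same Chernoff/ML bound yielding $\phi(P_X)$ on the correct-decoding event, and the same Csisz\'ar--K\"orner random-coding exponent for the compound-channel decoding error yielding $\rho_{\textnormal{succ}}(P_X,R)$. The paper's proof is terser, citing \cite[Theorem 10.2]{Csiszar2011} directly for the existence of a constant-composition code achieving the required exponent simultaneously for all states, whereas you spell out the MMI decoder and expurgation explicitly; both arrive at the same place.
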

\begin{proof}
Theorem~\ref{thm:bi-static-result} is a direct result of the following inequality \begin{align*}
    P_d^{(n)} &= \max_{s\in\calS} \max_{w\in[1;M]}\bbP(g^{(b)}(Y^n)\neq s|W=w,S=s)\\
    &= \max_{s\in\calS} \max_{w\in[1;M]}\bbP(g^{(b)}(Y^n)\neq s|h^{(b)}=w,W=w,S=s)\mathbb{P}(h^{(b)}=w|W=w,S=s)\\
    &\hspace{0.5cm}+ \max_{s\in\calS} \max_{w\in[1;M]}\bbP(g^{(b)}(Y^n)\neq s|h^{(b)}\neq w,W=w,S=s)\mathbb{P}(h^{(b)}\neq w|W=w,S=s)\\
    &\leq \max_{s\in\calS} \max_{w\in[1;M]}\bbP(g^{(b)}(Y^n)\neq s|h^{(b)}=w,W=w,S=s)\nonumber\\
    &\hspace{0.5cm}+ \max_{s\in\calS} \max_{w\in[1;M]}\bbP(h^{(b)}(Y^n)\neq w|W=w,S=s),
\end{align*}
where we have bounded the probability of making a wrong state estimation by $1$ when the codeword is decoded incorrectly. Note that the term 
$$
    \max_{s\in\calS} \max_{w\in[1;M]}\bbP(h^{(b)}(Y^n)\neq w|W=w,S=s)
$$
is the communication-error probability; it has been shown in \cite[Theorem 10.2]{Csiszar2011} that, for all $P_X$, there exists a constant composition code with type $P_X$ such that the communication-error exponent is lower bounded by $\rho_{\text{succ}}(P_X)$. Moreover, when the codeword is correctly decoded, the exponent of 
$$
    \max_{s\in\calS} \max_{w\in[1;M]}\bbP(g^{(b)}(Y^n)\neq s|h^{(b)}=w,W=w,S=s)
$$
can be lower bounded by the Chernoff information $\phi(P_X)$ as shown in Lemma~\ref{lem:1} in Section~\ref{sec:proofs} and proofs therein. 
\end{proof}

Unfortunately, the loss induced by upper bounding the term $\bbP(g^{(b)}(Y^n)\neq s|h^{(b)}\neq w,W=w,S=s)$ by $1$ might result in a loose  bound on the state exponent.
The difficulty of analyzing the term $\bbP(g^{(b)}(Y^n)\neq s|h^{(b)}\neq w,W=w,S=s)$ comes from the fact that the receiver does not know the conditional type $\widehat{p}_{\bfy|\bfx_w}$. 
By leveraging random constant composition codes and joint decoding/detection, the following result offers improvements.

\begin{theorem}
The closure region of all achievable joint communication and sensing rate/detection error exponent pairs for the bi-static joint communication and sensing model satisfies
\begin{align}
\mathbf{C}^{(b)} &\supseteq \mathcal{D}_{\textnormal{joint}} \triangleq \bigcup_{P_X\in\mathcal{P}_{\calX}}\left\{\begin{array}{l}(R,E)\in\bbR_+^2:\\
R\leq \min_{s\in\calS} \mathbb{I}(P_X,W_{Y|X,s})\\
E\leq \min\left(\rho_{\textnormal{joint}}(P_X, R), \phi(P_X)\right)
\end{array}\right\},
\end{align}
where 
\begin{align}
    \rho_{\textnormal{joint}}(P_X, R) \triangleq \min_{s\in\calS}\min_{\widehat{P}\in\calP_{\calY|\calX}}\left( \mathbb{D}(\widehat{P}\Vert W_{Y|X,s}|P_X)+\left|\min_{s'\neq s}\min_{P'\in\calP_{s,s'}(\widehat{P},P_X,R)}\mathbb{I}(P_X,P')-R\right|^{+}\right), 
    \label{eqn:rho_2}
\end{align}
and 
\begin{align*}
    &\calP_{s,s'}(\widehat{P},P_X,R) \triangleq \Bigg\{P'\in\calP_{\calY|\calX}: 
    \mathbb{D}(P'\Vert W_{Y|X,s'}|P_X) + \bbH(P'|P_X) \nonumber\\
    &\hspace{1cm}\leq \min\left(\beta(\widehat{P},P_X,R,s),\mathbb{D}(\widehat{P}\Vert W_{Y|X,s}|P_X)+\bbH(\widehat{P}|P_X)\right),
    P_X \circ P'  = P_X\circ \widehat{P}\Bigg\},
\end{align*}
where 
\begin{align}
    \beta(\widehat{P},P_X,R,s) \triangleq   \min_{\substack{P''\in\calP_{\calY|\calX}:\mathbb{I}(P_X,P'')<R,\\P_X\circ P''=P_X\circ \widehat{P}}} \bbD(P''\Vert W_{Y|X,s}|P_X) + \bbH(P''|P_X).
\end{align}
\label{thm:bi-static-result2}
\end{theorem}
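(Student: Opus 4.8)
The plan is to establish the inner bound $\mathcal{D}_{\textnormal{joint}}$ by random constant-composition coding with a decoder that jointly estimates the message and the state, analyzed through the method of types; the novelty over Theorem~\ref{thm:bi-static-result} is a sharper accounting of the state-detection error on the event that the message is decoded incorrectly. Fix $P_X\in\calP_{\calX}$ and a rate $R<\min_{s\in\calS}\mathbb{I}(P_X,W_{Y|X,s})$, draw $M=\lceil e^{nR}\rceil$ codewords i.i.d.\ uniformly over the type class $\calT_{P_X}^n$, and let the receiver output
\[(\hat w,\hat s)=\argmin_{w'\in[1;M],\,s'\in\calS}\left(\mathbb{D}(\hat{p}_{\mathbf{y}|\mathbf{x}_{w'}}\Vert W_{Y|X,s'}|P_X)+\mathbb{H}(\hat{p}_{\mathbf{y}|\mathbf{x}_{w'}}|P_X)\right),\]
setting $h^{(b)}(\mathbf{y})=\hat w$ and $g^{(b)}(\mathbf{y})=\hat s$; because $-\tfrac1n\log W_{Y|X,s'}^n(\mathbf{y}|\mathbf{x})$ equals the bracketed quantity for every $\mathbf{x}\in\calT_{P_X}^n$, this is joint maximum-likelihood estimation of $(w,s)$. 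Throughout we use the standard method-of-types estimates that $\mathbb{P}(\hat{p}_{\mathbf{y}|\mathbf{x}_w}=\widehat P\mid S=s)$ is $e^{-n\mathbb{D}(\widehat P\Vert W_{Y|X,s}|P_X)}$ to first exponential order, and, for $w'\neq w$, $\mathbb{P}(\hat{p}_{\mathbf{y}|\mathbf{x}_{w'}}=P')$ is $e^{-n\mathbb{I}(P_X,P')}$ to first exponential order when $P_X\circ P'=P_X\circ\widehat P$ (and zero otherwise), using that foreign codewords are independent of $(\mathbf{x}_w,\mathbf{y})$ and uniform on $\calT_{P_X}^n$.

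For message reliability, condition on $S=s$: $\hat w\neq w$ forces some $(w',s')$ with $w'\neq w$ to have metric no larger than that of $(w,s)$; a union bound over $w'$, $s'$ and the polynomially many conditional-type pairs, together with the two estimates above, bounds the average message-error probability by $\exp(-n(E_R-o(1)))$, where $E_R$ is the minimum over $s$ and $\widehat P$ of $\mathbb{D}(\widehat P\Vert W_{Y|X,s}|P_X)+\abs{\,\min\mathbb{I}(P_X,P')-R\,}^+$, the inner minimum being over conditional types $P'$ with $P_X\circ P'=P_X\circ\widehat P$ whose metric is at most that of $(\widehat P,s)$; since $P'=W_{Y|X,s}$ is always feasible, $E_R=0$ would force $R\geq\mathbb{I}(P_X,W_{Y|X,s})$ for some $s$, so $E_R>0$ under our choice of $R$ and the message error vanishes (this is just the compound-channel reliability of universal decoders, cf.~\cite{Csiszar2011}). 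For the state, write $P_d^{(n)}\leq\mathbb{P}(\hat w=w,\hat s\neq s)+\mathbb{P}(\hat w\neq w,\hat s\neq s)$. On $\{\hat w=w\}$ the common term $\mathbb{H}(\hat{p}_{\mathbf{y}|\mathbf{x}_w}|P_X)$ cancels, so $\hat s=\argmin_{s'}\mathbb{D}(\hat{p}_{\mathbf{y}|\mathbf{x}_w}\Vert W_{Y|X,s'}|P_X)$ is exactly the maximum-likelihood state detector run with perfect knowledge of the transmitted codeword; hence $\{\hat w=w,\hat s\neq s\}$ lies inside the error event of that detector, whose exponent is at least $\phi(P_X)$ by Lemma~\ref{lem:1} (exactly as in the proof of Theorem~\ref{thm:bi-static-result}).

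The main step bounds $\mathbb{P}(\hat w\neq w,\hat s\neq s)$ and is where $\beta$ and $\calP_{s,s'}$ appear. Call the codebook \emph{rich} if, for every output type $t$ and every conditional type $P''$ with $P_X\circ P''=t$ and $\mathbb{I}(P_X,P'')<R$, some codeword $w''\neq w$ satisfies $\hat{p}_{\mathbf{y}|\mathbf{x}_{w''}}=P''$ for all $\mathbf{y}$ of type $t$; since the expected number of such codewords grows as $e^{n(R-\mathbb{I}(P_X,P''))}$, a Binomial-tail bound makes the failure probability doubly exponentially small, and a union bound over the polynomially many $(t,P'')$ preserves this (a vanishing rate slack $R-\delta_0$ is used here and removed at the end by continuity of $\beta$). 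On the rich event, for any $\mathbf{y}$ with $\hat{p}_{\mathbf{y}|\mathbf{x}_w}=\widehat P$, the smallest metric over all state-$s$ hypotheses $(w'',s)$ — contributed by the true pair and, by richness, by one pair for every admissible $P''$ — is at most $\min(\beta(\widehat P,P_X,R,s),\,\mathbb{D}(\widehat P\Vert W_{Y|X,s}|P_X)+\mathbb{H}(\widehat P|P_X))$; consequently $\hat s=s'\neq s$ with $\hat w=w'\neq w$ forces $\hat{p}_{\mathbf{y}|\mathbf{x}_{w'}}\in\calP_{s,s'}(\widehat P,P_X,R)$. Bounding $\mathbb{P}(\hat w\neq w,\hat s\neq s)$ on the rich event by $\sum_{\widehat P}\mathbb{P}(\hat{p}_{\mathbf{y}|\mathbf{x}_w}=\widehat P\mid S=s)\,\mathbb{P}(\exists\,w'\neq w,\,s'\neq s:\hat{p}_{\mathbf{y}|\mathbf{x}_{w'}}\in\calP_{s,s'}(\widehat P,P_X,R))$ and inserting the two type estimates together with a union over $w'$ and $s'$ yields an exponent of exactly $\rho_{\textnormal{joint}}(P_X,R)$ as defined in~\eqref{eqn:rho_2}.

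Combining the three contributions, the average (over codebook and message) communication-error probability vanishes while the average detection-error probability is at most $\exp(-n(\min(\phi(P_X),\rho_{\textnormal{joint}}(P_X,R))-o(1)))$ plus the negligible rich-failure term; a Markov argument followed by expurgating the worst half of the messages (an $o(1)$ rate loss) produces a deterministic code of rate $R-o(1)$ with $P_c^{(n)}\to0$ and $E_d^{(n)}\geq\min(\phi(P_X),\rho_{\textnormal{joint}}(P_X,R))-o(1)$, and taking the closure over $P_X$ and over $R<\min_s\mathbb{I}(P_X,W_{Y|X,s})$ gives $\mathbf{C}^{(b)}\supseteq\mathcal{D}_{\textnormal{joint}}$. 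The crux — and essentially the only genuinely new difficulty relative to Theorem~\ref{thm:bi-static-result}, which simply bounded the incorrect-message term by $1$ — is the rich-codebook step: one must recognize that a state error requires the winning pair to beat not merely the true pair $(w,s)$ but the \emph{best} state-$s$ hypothesis available in the random codebook, whose metric is controlled by $\beta(\widehat P,P_X,R,s)$, and then pin down $\calP_{s,s'}$ precisely (both inequalities together with the output-marginal constraint $P_X\circ P'=P_X\circ\widehat P$); the surrounding type-counting, conditional-independence, and $\delta$-continuity arguments are routine.
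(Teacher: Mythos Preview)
Your proposal is correct and follows essentially the same approach as the paper: random constant-composition codewords, the joint maximum-likelihood rule~\eqref{eqn:decoder_estimator}, the split of the detection error into a ``correct-codeword'' piece handled by Lemma~\ref{lem:1} and a ``wrong-codeword'' piece, and---crucially---the rich-codebook observation that with overwhelming probability every low-mutual-information conditional type is realized by some foreign codeword, which pins the best state-$s$ metric at $\min(\beta,\mathbb{D}+\mathbb{H})$ and forces the offending pair into $\calP_{s,s'}$. The only cosmetic differences are bookkeeping: the paper organizes the union bound via indicator variables $J_1$ and $\{I_k\}_{k\neq 1}$ and defines a per-$k$ richness set $\calA_k$ (excluding both the true codeword and codeword $k$ to keep $\bfx_k$ independent of $\calA_k$), whereas you split on $\{\hat w=w\}$ versus $\{\hat w\neq w\}$ and use a single global rich event; both routes yield the same exponent after the same type-counting and double-exponential estimates.
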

The calculation of $\rho_{\text{joint}}(P_X, R)$ in (\ref{eqn:rho_2}) involves a minimization over $\widehat{P}\in\calP_{\calY|\calX}$ and $P' \in\calP_{s,s'}(\widehat{P},P_X,R)$. In Corollary~\ref{Cor:3}, we provide a lower bound on $\rho_{\text{joint}}(P_X, R)$ to simplify the expression and show that it is greater than or equal to the compound channel communication-error exponent $\rho_{\text{succ}}(P_X, R)$.

\begin{corollary}
For all $P_X$, it holds that $\rho_{\textnormal{joint}}(P_X, R) \geq \rho_{\textnormal{succ}}(P_X, R)
$, and hence, $ \mathcal{D}_{\textnormal{joint}} \supseteq \mathcal{D}_{\textnormal{succ}}$. Moreover, $\rho_{\textnormal{joint}}(P_X, R)$ is lower bounded by 
\begin{align}
    \rho_{\textnormal{joint}}(P_X,R) \geq \min_s\min\left(\min_{\widehat{P}\in\calP_{\calY|\calX}'(P_X,R,s)}\gamma_1(\widehat{P}, R,s),\min_{\widehat{P}\in\calP_{\calY|\calX}''(P_X,R,s)}\gamma_2(\widehat{P}, R,s)\right),
\end{align}
where for all $\widehat{P}$ and $s$, 
\begin{align*}
    \gamma_1(\widehat{P},R,s) &\triangleq  \mathbb{D}(\widehat{P}\Vert W_{Y|X,s}|P_X)+\left|\mathbb{I}(P_X,\widehat{P})-R\right|^{+}\\
    \gamma_2(\widehat{P},R,s) &\triangleq \mathbb{D}(\widehat{P}\Vert W_{Y|X,s}|P_X) +\Bigg|\mathbb{D}(\widehat{P}\Vert W_{Y|X,s}|P_X) + \mathbb{H}(P_X\circ \widehat{P}) - \beta(\widehat{P},P_X,R,s)-R\Bigg|^{+},
\end{align*}
and 
\begin{align*}
    \calP_{\calY|\calX}'(P_X,R,s) &\triangleq \{\widehat{P}\in\calP_{\calY|\calX}:\max_{s''\in\calS}\beta(\widehat{P},P_X,R,s'') \geq \mathbb{D}(\widehat{P}\Vert W_{Y|X,s}|P_X)+\mathbb{H}(\widehat{P}|P_X)\}\\
    \calP_{\calY|\calX}''(P_X,R,s) &\triangleq \{\widehat{P}\in\calP_{\calY|\calX}:\max_{s''\in\calS}\beta(\widehat{P},P_X,R,s'') < \mathbb{D}(\widehat{P}\Vert W_{Y|X,s}|P_X)+\mathbb{H}(\widehat{P}|P_X)\}.
\end{align*}
    \label{Cor:3}
\end{corollary}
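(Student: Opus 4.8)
The plan is to prove both assertions at once from a single pointwise inequality. Fix the input type $P_X$ and rate $R$, and for $s\in\calS$, $\widehat P\in\calP_{\calY|\calX}$ let $T_s(\widehat P)$ denote the summand $\mathbb{D}(\widehat P\Vert W_{Y|X,s}|P_X)+\left|\min_{s'\neq s}\min_{P'\in\calP_{s,s'}(\widehat P,P_X,R)}\mathbb{I}(P_X,P')-R\right|^{+}$, so that $\rho_{\textnormal{joint}}(P_X,R)=\min_s\min_{\widehat P}T_s(\widehat P)$. I will show that for every $s$ and $\widehat P$, $T_s(\widehat P)\geq\rho_{\textnormal{succ}}(P_X,R)$ and $T_s(\widehat P)$ is at least the right-hand side of the claimed refined bound; minimizing over $(s,\widehat P)$ then yields $\rho_{\textnormal{joint}}(P_X,R)\geq\rho_{\textnormal{succ}}(P_X,R)$, and hence $\mathcal D_{\textnormal{joint}}\supseteq\mathcal D_{\textnormal{succ}}$ because these two regions differ only by having $\rho_{\textnormal{joint}}$ in place of $\rho_{\textnormal{succ}}$ in the constraint $E\leq\min(\rho,\phi(P_X))$, as well as the refined inequality. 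If all the sets $\calP_{s,s'}(\widehat P,P_X,R)$, $s'\neq s$, are empty, then $T_s(\widehat P)=+\infty$ and there is nothing to prove; otherwise the inner double minimum is finite and attained, say at a state $s'\neq s$ and a conditional type that I call $\widetilde P\in\calP_{s,s'}(\widehat P,P_X,R)$.

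The argument rests on three observations about $\widetilde P$. First, membership in $\calP_{s,s'}(\widehat P,P_X,R)$ forces $P_X\circ\widetilde P=P_X\circ\widehat P$, so $\mathbb{I}(P_X,\widetilde P)=\mathbb{H}(P_X\circ\widehat P)-\mathbb{H}(\widetilde P|P_X)$ and, subtracting the analogous identity for $\widehat P$, $\mathbb{I}(P_X,\widetilde P)-\mathbb{I}(P_X,\widehat P)=\mathbb{H}(\widehat P|P_X)-\mathbb{H}(\widetilde P|P_X)$. Second, it also gives $\mathbb{D}(\widetilde P\Vert W_{Y|X,s'}|P_X)+\mathbb{H}(\widetilde P|P_X)\leq\min\left(\beta(\widehat P,P_X,R,s),\ \mathbb{D}(\widehat P\Vert W_{Y|X,s}|P_X)+\mathbb{H}(\widehat P|P_X)\right)$. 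Third, $\beta(\widehat P,P_X,R,s'')$ enters its own definition only through the marginal constraint $P_X\circ P''=P_X\circ\widehat P$, so it depends on $\widehat P$ only through $P_X\circ\widehat P$; combined with the first two observations this gives $\mathbb{D}(\widetilde P\Vert W_{Y|X,s'}|P_X)+\mathbb{H}(\widetilde P|P_X)\leq\beta(\widehat P,P_X,R,s)=\beta(\widetilde P,P_X,R,s)\leq\max_{s''}\beta(\widetilde P,P_X,R,s'')$, i.e., $\widetilde P\in\calP_{\calY|\calX}'(P_X,R,s')$.

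Next I split on the sign of $\mathbb{D}(\widetilde P\Vert W_{Y|X,s'}|P_X)-\mathbb{D}(\widehat P\Vert W_{Y|X,s}|P_X)$. Since the $|\cdot|^{+}$ term in $T_s(\widehat P)$ equals $\left|\mathbb{I}(P_X,\widetilde P)-R\right|^{+}$, if the sign is $\leq 0$ then $T_s(\widehat P)\geq\mathbb{D}(\widetilde P\Vert W_{Y|X,s'}|P_X)+\left|\mathbb{I}(P_X,\widetilde P)-R\right|^{+}=\gamma_1(\widetilde P,R,s')$, which is $\geq\rho_{\textnormal{succ}}(P_X,R)$ and, because $\widetilde P\in\calP_{\calY|\calX}'(P_X,R,s')$, is one of the terms being minimized on the right-hand side of the refined bound. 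If the sign is $>0$, the first two observations force $\mathbb{H}(\widetilde P|P_X)<\mathbb{H}(\widehat P|P_X)$, hence $\mathbb{I}(P_X,\widetilde P)>\mathbb{I}(P_X,\widehat P)$, and monotonicity of $x\mapsto|x-R|^{+}$ gives $T_s(\widehat P)\geq\mathbb{D}(\widehat P\Vert W_{Y|X,s}|P_X)+\left|\mathbb{I}(P_X,\widehat P)-R\right|^{+}=\gamma_1(\widehat P,R,s)\geq\rho_{\textnormal{succ}}(P_X,R)$, which settles the comparison with $\rho_{\textnormal{succ}}$. For the refined bound in this case I split once more: if $\widehat P\in\calP_{\calY|\calX}'(P_X,R,s)$ then $\gamma_1(\widehat P,R,s)$ is itself a minimized term; if $\widehat P\in\calP_{\calY|\calX}''(P_X,R,s)$, then the second observation together with $\mathbb{D}(\widetilde P\Vert W_{Y|X,s'}|P_X)>\mathbb{D}(\widehat P\Vert W_{Y|X,s}|P_X)$ gives $\beta(\widehat P,P_X,R,s)>\mathbb{D}(\widehat P\Vert W_{Y|X,s}|P_X)+\mathbb{H}(\widetilde P|P_X)$, whence $\mathbb{I}(P_X,\widetilde P)=\mathbb{H}(P_X\circ\widehat P)-\mathbb{H}(\widetilde P|P_X)>\mathbb{D}(\widehat P\Vert W_{Y|X,s}|P_X)+\mathbb{H}(P_X\circ\widehat P)-\beta(\widehat P,P_X,R,s)$, and again by monotonicity $T_s(\widehat P)\geq\gamma_2(\widehat P,R,s)$, which is a minimized term.

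The hard part will be the bookkeeping in the case $\mathbb{D}(\widetilde P\Vert W_{Y|X,s'}|P_X)>\mathbb{D}(\widehat P\Vert W_{Y|X,s}|P_X)$: one must recognize that the right object to feed back into a $\gamma_1$-type bound is the \emph{competing} conditional type $\widetilde P$ together with the wrong state $s'$, rather than the true pair $(\widehat P,s)$; that $\widetilde P$ automatically lands in $\calP_{\calY|\calX}'(P_X,R,s')$ precisely because $\beta$ depends on its conditional-type argument only through the output marginal; and that the split of conditional types into $\calP_{\calY|\calX}'$ and $\calP_{\calY|\calX}''$ at $(\widehat P,s)$ is exactly what makes the relabelling bound $\gamma_1(\widehat P,R,s)$ available on the first piece and the $\beta$-based bound $\gamma_2(\widehat P,R,s)$ available on the second. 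The rest is the conditional-entropy identity for mutual information at a fixed output marginal together with monotonicity of $|\cdot|^{+}$, which are routine.
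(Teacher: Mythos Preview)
Your proof is correct and follows essentially the same approach as the paper: both rely on the same three key facts (the shared output marginal giving the mutual-information/conditional-entropy identity, the defining inequality of $\calP_{s,s'}$, and the observation that $\beta$ depends on its conditional-type argument only through the output marginal so that the minimizer lands in $\calP_{\calY|\calX}'(P_X,R,s')$). The only difference is organizational: the paper first partitions $\widehat P$ into $\calP_{\calY|\calX}'$ and $\calP_{\calY|\calX}''$ and then performs different secondary splits in each piece, whereas you split first on the sign of $\mathbb{D}(\widetilde P\Vert W_{Y|X,s'}|P_X)-\mathbb{D}(\widehat P\Vert W_{Y|X,s}|P_X)$ and only invoke the $\calP'/\calP''$ partition where needed; your ordering also yields the comparison with $\rho_{\textnormal{succ}}$ simultaneously rather than as a separate corollary, which is a mild streamlining but not a different idea.
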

Note that $\rho_{\text{succ}}(P_X,R)$ can be expressed as minimizing $\gamma_1(\widehat{P},R,s)$ over all possible $\widehat{P}$ and all $s$. Therefore, we can show that $\rho_{\textnormal{joint}}$ is larger than $\rho_{\textnormal{succ}}$ if $\gamma_2(\widehat{P},R,s)\geq \gamma_1(\widehat{P},R,s)$ for all $\widehat{P}\in \calP_{\calY|\calX}''(P_X,R,s)$.
When $\widehat{P}\in \calP_{\calY|\calX}''(P_X,R,s)$, one can observe that 
$$
\mathbb{D}(\widehat{P}\Vert W_{Y|X,s}|P_X) + \mathbb{H}(P_X\circ\widehat{P}) - \mathbb{I}(P_X,\widehat{P}) =   \mathbb{D}(\widehat{P}\Vert W_{Y|X,s}|P_X) + \mathbb{H}(\widehat{P}|P_X)
$$ is greater than $\beta(P_X,R,s)$
and hence, $\gamma_2(\widehat{P},R,s)\geq \gamma_1(\widehat{P},R,s)$. Therefore,
\begin{align}
    \rho_{\text{joint}}(P_X, R) &\geq \min_s\min\left(\min_{\widehat{P}\in\calP_{\calY|\calX}'(P_X,s)}\gamma_1(\widehat{P}, R,s),\min_{\widehat{P}\in\calP_{\calY|\calX}''(P_X,s)}\gamma_1(\widehat{P}, R,s)\right)\\
    &= \rho_{\text{succ}}(P_X, R). 
\end{align}

\begin{remark}
  It is known that the communication-error exponent is zero when one transmits at the rate of compound channel capacity. Recall that $\rho_{\textnormal{succ}}$ is calculated by upper bounding the detection-error probability by $1$ when the message decoding error happens. Therefore, $\rho_{\textnormal{succ}}$ is also zero when $R=\min_{s\in\calS}\avgI{P_X,W_{Y|X,s}}$. In contrast, $\rho_{\textnormal{joint}}$ can be positive even when $R=\min_{s\in\calS}\avgI{P_X,W_{Y|X,s}}$. This can be seen from the expression of $\gamma_2(\widehat{P},R,s)$ in Corollary~\ref{Cor:3}, and we will illustrate this by an example given in the next sub-section.
\end{remark}

\subsection{Numerical Examples}
\label{sec:example}

\paragraph{Mono-Static Model} We first consider the channel $W_{YZ|X,S}$ defined in Table~\ref{tab:ex1}. In this example, $\calY=\calZ=\calX=\{0,1\}$ and $W_{Z|X,S}=W_{Y|X,S}$. Note that transmitting the symbol $X=1$ is most useful to identify the channel state  because this is the situation in which the output distributions corresponding to different states are most distinguishable. However, if the transmitter only transmits $X=1$, the communication rate would become zero as seen in Fig.~\ref{fig:cap_1_a}. Therefore, the tradeoff between maximizing the detection-error exponent and maximizing the communication rate can be clearly seen in Fig.~\ref{fig:cap_1_a}. On the other hand, the difference between $\mathbf{C}_{\textnormal{open}}^{(m)}$ and the inner bound of $\mathbf{C}_{\textnormal{closed}}^{(m)}$ given in Theorem~\ref{thm:2} is shown in Fig.~\ref{fig:cap_1_b}, illustrating how the communication rate is increased. Indeed, the inner bound region for $\mathbf{C}^{(m)}_{\textnormal{closed}}$ is larger than $\bfC^{(m)}_{\textnormal{open}}$ because the compound capacity is here strictly less than the worst-case capacity.

In contrast to the channel given in Table \ref{tab:ex1}, for which the tradeoff between sensing and communication exists, the channel given in Table~\ref{tab:ex2} is a \ac{BSC} for which, according to Corollary~\ref{cor:1}, the best error exponent of detection is always achieved regardless of the type of codewords. The result of Theorem~\ref{thm:mono-capacity} for this channel is illustrated in Fig.~\ref{fig:cap_2}.

\begin{table}[h]
\centering
\caption{Table for $W_{Z|X,S}(0) = W_{Y|X,S}(0)$ for all $X\in\{0,1\}$ and $S\in\{0,1,2\}$.}
\begin{tabular}{|c|c|c|}
    \hline
     \backslashbox{$S$}{$X$}
& 0 & 1  \\
     \hline
     0 & 0.95 & 0.45\\
     \hline
     1 & 0.9 & 0.2\\
     \hline
     2 & 0.5 & 0.03\\
     \hline
\end{tabular}
\label{tab:ex1}

\end{table}

\begin{figure}[!h]
    \begin{subfigure}{.47\textwidth}
    \centering
    \includegraphics[scale=0.6]
    {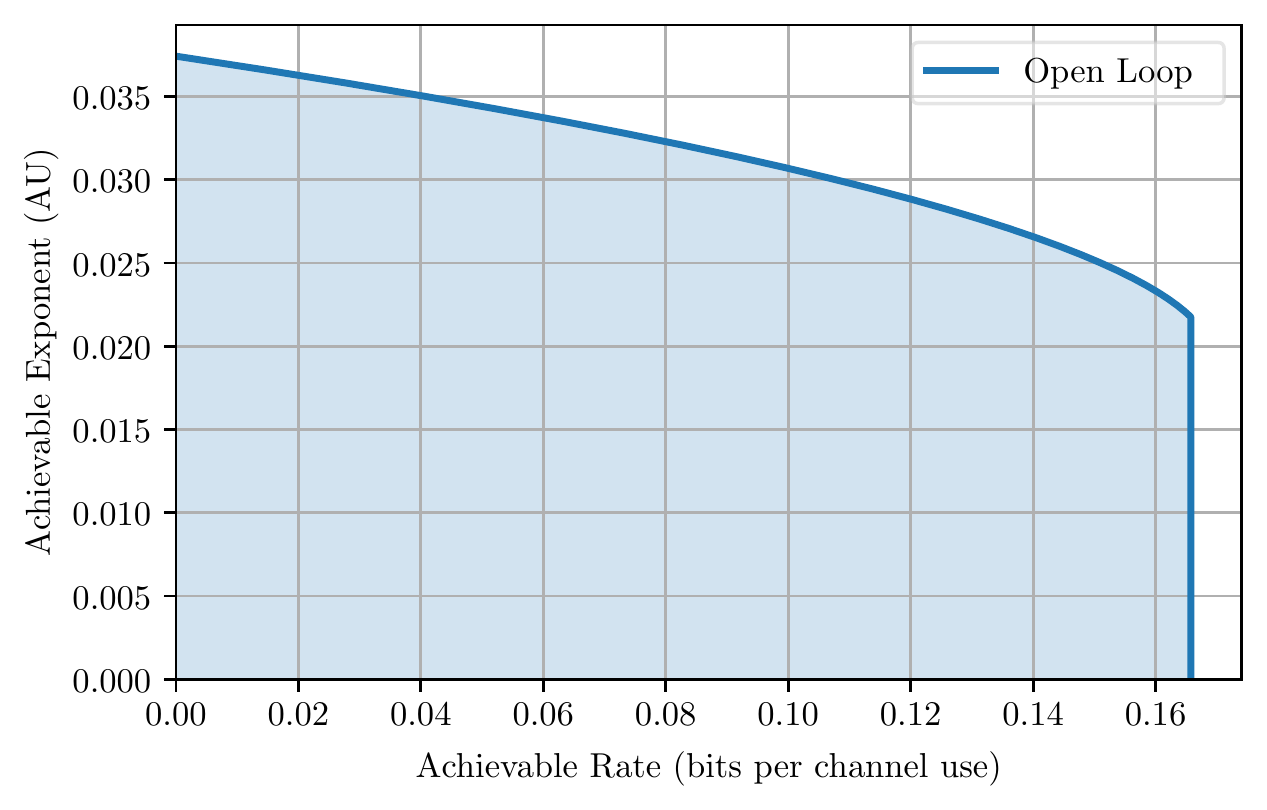}
    \caption{$\mathbf{C}_{\textnormal{open}}^{(m)}$}
    \label{fig:cap_1_a}
    \end{subfigure}
    \begin{subfigure}{.47\textwidth}
    \centering
    \includegraphics[scale=0.6]{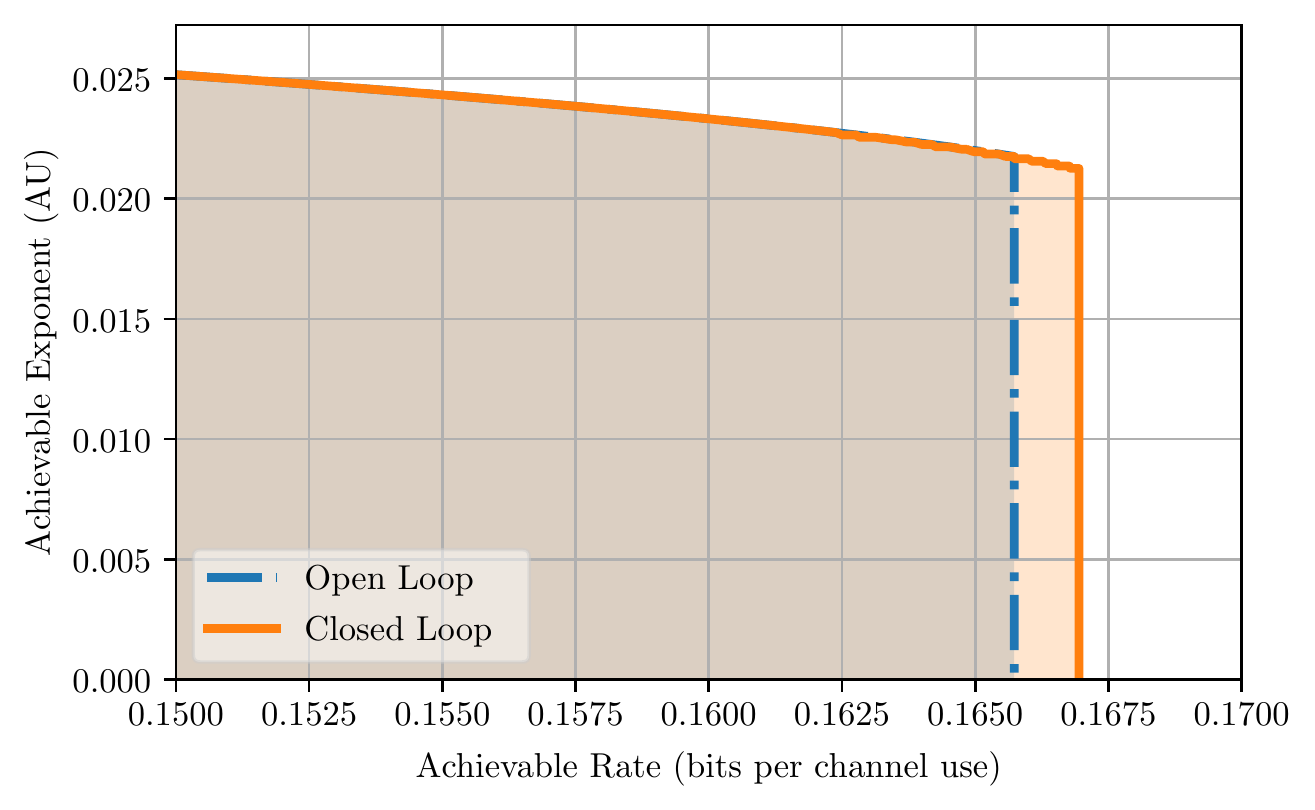}
    \caption{$\mathbf{C}_{\textnormal{open}}^{(m)}$ and the inner bound of $\mathbf{C}_{\textnormal{closed}}^{(m)}$ }
    \label{fig:cap_1_b}
    \end{subfigure}
    \caption{Closure region $\mathbf{C}^{(m)}_{\text{open}}$ and inner bounds for  $\mathbf{C}^{(m)}_{\textnormal{closed}}$ corresponding to the channel of Table~\ref{tab:ex1}.}
    \label{fig:cap_1}
\end{figure}

\begin{table}[!h]
\centering
\caption{Table for $W_{Z|X,S}(0) = W_{Y|X,S}(0)$ for all $X\in\{0,1\}$ and $S\in\{0,1,2\}$.}
\begin{tabular}{|c|c|c|}
    \hline
     \backslashbox{$S$}{$X$}
& 0 & 1  \\
     \hline
     0 & 0.9 & 0.1\\
     \hline
     1 & 0.8 & 0.2\\
     \hline
     2 & 0.7 & 0.3\\
     \hline
\end{tabular}
\label{tab:ex2}

\end{table}

\begin{figure}[h]
    \centering
    \includegraphics[scale=0.65]{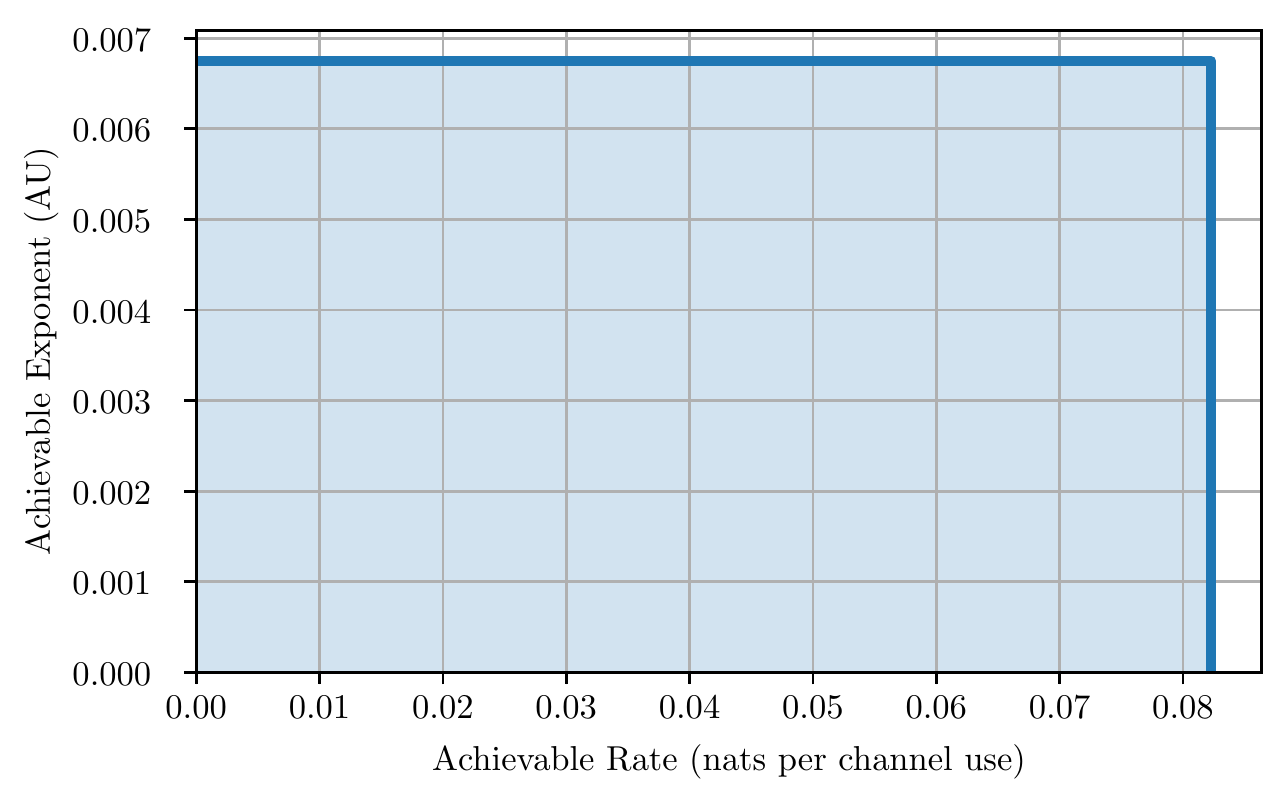}
    \caption{Closure region $\mathbf{C}^{(m)}_{\text{open}}$ corresponding to the channel of Table~\ref{tab:ex2}.}
    \label{fig:cap_2}
\end{figure}

\paragraph{Bi-Static Model}

For the bi-static model, we present a numerical example as defined in Table~\ref{tab:ex3}. Here, $\calY=\calZ=\calX=\{0,1\}$ and $\calS=\{0,1\}$. When $\calS=0$, the channel $W_{Y|X,0}$ is a \ac{BSC} with cross-over probability $0.3$, and $W_{Y|X,1}$ is a \ac{BSC} with cross-over probability $0.6$.  From  Fig.~\ref{fig:cap_bistatic_1}, one observes that $\calD_{\text{joint}}$ is strictly larger than $\calD_{\text{succ}}$ especially when the code rate is high.
When the rate is low, the exponent $\rho_{\text{joint}}$ is dominated by the Chernoff information $\phi(P_X)$, which is approximately $0.048$ in this example. When the rate is high, the exponent is dominated by a rate-dependent term, capturing the fact the receiver's inability to easily decode the transmitted sequence is what drives performance. 
By the definition of $\rho_{\text{succ}}$, the value of $\rho_{\text{succ}}$ is zero when the code rate is at the compound channel capacity. However, the joint detection scheme reaches a positive exponent even when transmitting at the compound capacity, highlighting the benefits of joint detection over successive decoding.

\begin{table}[!h]
\centering
\caption{Table for $W_{Y|X,S}(0)$ for all $X\in\{0,1\}$ and $S\in\{0,1\}$.}
\begin{tabular}{|c|c|c|}
    \hline
     \backslashbox{$S$}{$X$}
& 0 & 1  \\
     \hline
     0 & 0.7 & 0.3\\
     \hline
     1 & 0.4 & 0.6\\
     \hline
\end{tabular}
\label{tab:ex3}
\end{table}

\begin{figure}[h]
    \centering
    \includegraphics[scale=0.65]{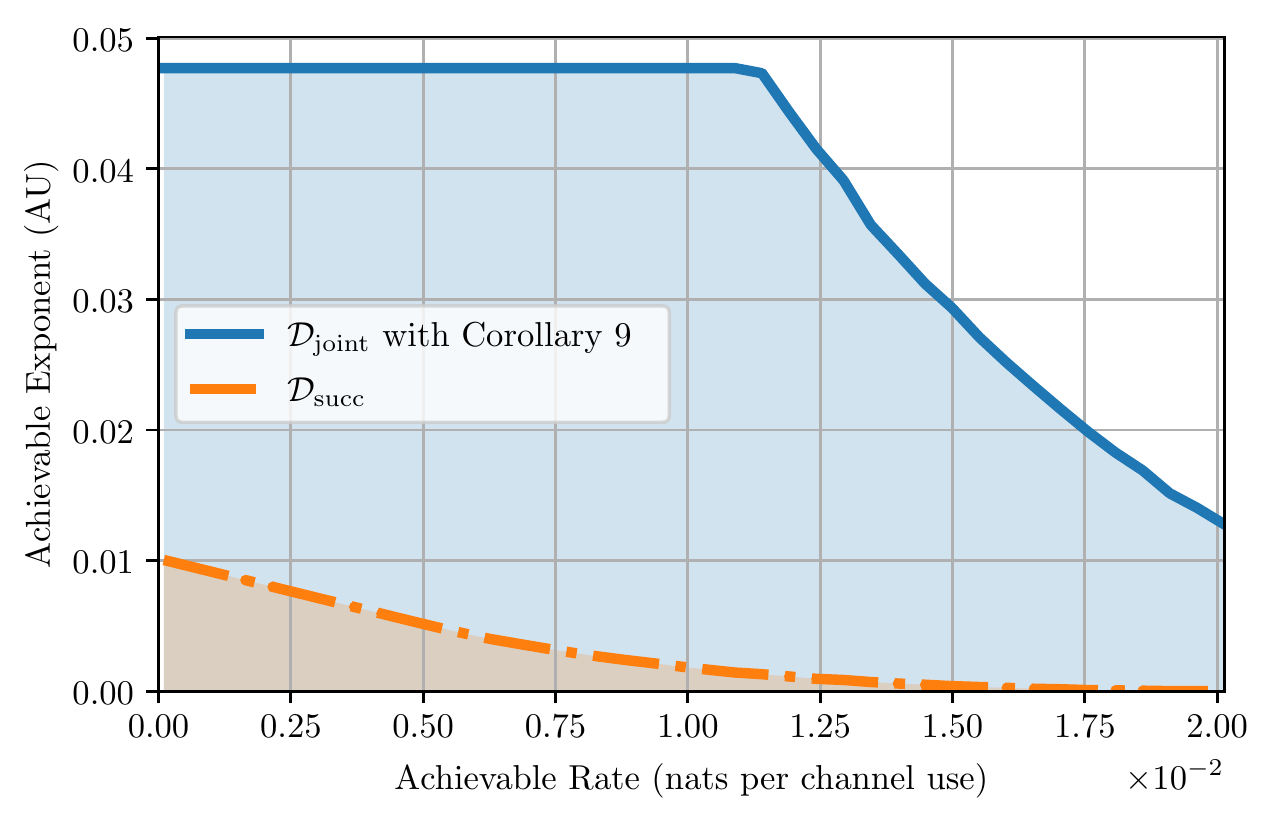}
    \caption{Region $\calD_{\text{succ}}$ and $\calD_{\text{joint}}$ corresponding to Table~\ref{tab:ex3}.}
    \label{fig:cap_bistatic_1}
\end{figure}
\FloatBarrier

\section{Detailed Proofs}
\label{sec:proofs}
\subsection{Achievability Proof of  Theorem~\ref{thm:mono-capacity}}
\label{sec:achievability}
We show that all $(R,E)$ pairs within the region $\mathbf{C}_{\text{open}}^{(m)}$ are achievable. Since we restrict ourselves to open-loop schemes, we may fix $P_X$ as the type of all codewords. Fix any $\epsilon>0$. By~\cite[Theorem 10.2]{Csiszar2011}, there exists a code with encoder ${f}^{(m)}$ and decoder $h^{(m)}$ such that ${f}^{(m)}(w)\in \mathcal{T}_{P_X}^n$ for all $w$, the rate is at least $\min_{s\in\calS}\mathbb{I}(P_X,W_{Y|X,s})-\epsilon$, and $\max_w\mathbb{P}(h(Y^n)\neq w|S=s)<\epsilon$ for all $s\in\mathcal{S}$. Then, the detection-error exponent $\phi(P_X)$ is given by Lemma~\ref{lem:1} adapted from \cite[Theorem 1]{Nitinawarat2013}. 
\begin{lemma}
Suppose that the the codeword corresponding to the message $w\in[1;M]$ has type $P_X\in \mathcal{P}_{\calX}$, the conditional detection-error exponent  $E_{\textnormal{d},w}\triangleq -\frac{1}{n}\log \max_{s\in\calS}\mathbb{P}(g(Z^n)\neq s|S=s,W=w)$ in an open-loop scheme is asymptotically upper bounded by
\begin{align*}
  \phi(P_X)&\eqdef \min_{s} \min_{s'\neq s}\max_{\ell\in [0,1]} -\sum_{x}P_X(x)\log \left(\sum_{z} W_{Z|X,s}(z|x)^{\ell}W_{Z|X,s'}(z|x)^{1-\ell}\right). 
  \end{align*}
  \label{lem:1}
Moreover, it is also asymptotically achievable by a maximum likelihood estimator $g_{\textnormal{ML}}$. Specifically, 
\begin{align}
    \max_{s\in\calS}\mathbb{P}\left(g_{\textnormal{ML}}(Z^n)\neq s\middle|S=s,W=w\right) \leq \Theta_n(1) e^{-\phi(P_X)}
\end{align}
for all $w\in[1;M]$, and for all $s\in\calS$
\begin{align}
    \mathbb{P}\left(g_{\textnormal{ML}}(Z^n)\neq s\middle|S=s,W=w\right) \leq \Theta_n(1) e^{-\psi_s(P_X)},
\end{align}
for all $w\in[1;M]$, where 
\begin{align}
    \psi_s(P_X)&\eqdef \min_{s'\neq s}\max_{\ell\in [0,1]} -\sum_{x}P_X(x)\log \left(\sum_{z} W_{Z|X,s}(z|x)^{\ell}W_{Z|X,s'}(z|x)^{1-\ell}\right). 
\end{align}
\end{lemma}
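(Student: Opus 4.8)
Conditioning on $W=w$ fixes the transmitted codeword $\bfx$, which by hypothesis has type $P_X$, so that under $S=s$ the state estimator observes $Z^n$ drawn from the product law $\prod_{i=1}^n W_{Z|X,s}(\cdot|x_i)$. Grouping the $n$ coordinates by the value $x\in\calX$ of $x_i$, this is a product of $\card{\calX}$ conditionally i.i.d.\ blocks, the $x$-block having length $n\hat p_{\bfx}(x)=nP_X(x)$; the estimator therefore faces a \emph{non-adaptive} $\card{\calS}$-ary hypothesis test, and the statement is the specialization of \cite[Theorem~1]{Nitinawarat2013} to the case without control. The plan is to prove the two directions separately, reducing each to a binary test.

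For the achievability (the two bounds on $g_{\textnormal{ML}}$) I would analyze the maximum-likelihood rule $g_{\textnormal{ML}}(z^n)=\argmax_{s'\in\calS}\prod_i W_{Z|X,s'}(z_i|x_i)$. Conditioned on $S=s$, a union bound gives $\mathbb{P}_s(g_{\textnormal{ML}}\neq s)\le\sum_{s'\neq s}\mathbb{P}_s\bigl(\prod_i W_{Z|X,s'}(z_i|x_i)\ge\prod_i W_{Z|X,s}(z_i|x_i)\bigr)$, and each summand is the error probability of a binary likelihood-ratio test, bounded for every $\ell\in[0,1]$ by the Chernoff inequality $\mathbb{E}_s[\mathrm{LR}^{1-\ell}]=\prod_{i}\sum_z W_{Z|X,s}(z|x_i)^{\ell}W_{Z|X,s'}(z|x_i)^{1-\ell}$. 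Factoring this product over the blocks yields $\prod_{x}\bigl(\sum_z W_{Z|X,s}(z|x)^{\ell}W_{Z|X,s'}(z|x)^{1-\ell}\bigr)^{nP_X(x)}$, and optimizing $\ell$ over $[0,1]$ produces exactly the per-pair exponent appearing inside $\psi_s(P_X)$; minimizing over $s'\neq s$ gives $\mathbb{P}_s(g_{\textnormal{ML}}\neq s)\le(\card{\calS}-1)\,e^{-n\psi_s(P_X)}$, and taking the worst $s$ together with $\phi(P_X)=\min_s\psi_s(P_X)$ gives the $\Theta_n(1)e^{-n\phi(P_X)}$ bound, the factor $\card{\calS}-1$ being absorbed in $\Theta_n(1)$.

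For the converse (the asymptotic upper bound $E_{\textnormal{d},w}\le\phi(P_X)$ valid for every estimator) I would fix an arbitrary $g$ with decision cells $A_{s}=\{z^n:g(z^n)=s\}$ and pick $(s^\star,s'^\star)$ attaining the double minimum defining $\phi(P_X)$. Since the cells are disjoint, $A_{s^\star}\subseteq A_{s'^\star}^{c}$, so with $B\eqdef A_{s^\star}$ one gets $\mathbb{P}_{s^\star}(g\neq s^\star)+\mathbb{P}_{s'^\star}(g\neq s'^\star)\ge\mathbb{P}_{s^\star}(B^{c})+\mathbb{P}_{s'^\star}(B)\ge\sum_{z^n}\min\bigl(\mathbb{P}_{s^\star}(z^n),\mathbb{P}_{s'^\star}(z^n)\bigr)$. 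To lower bound this min-sum by $e^{-n\phi(P_X)+o(n)}$ — i.e.\ to recover the Chernoff exponent rather than the weaker Bhattacharyya one — I would restrict the sum to the conditional type class $\calT_{\cdot|\bfx}(R)$, where $R(\cdot|x)$ is a conditional type approximating the tilted law $R_x(z)\propto W_{Z|X,s^\star}(z|x)^{\ell^\star}W_{Z|X,s'^\star}(z|x)^{1-\ell^\star}$ with $\ell^\star$ the interior maximizer (interior by the non-degeneracy assumption on the Chernoff information) of the $(s^\star,s'^\star)$ term. First-order stationarity of $\ell^\star$ yields the identity $\sum_x P_X(x)\D{R_x}{W_{Z|X,s^\star}(\cdot|x)}=\sum_x P_X(x)\D{R_x}{W_{Z|X,s'^\star}(\cdot|x)}=\phi(P_X)$, whence both product laws assign to every $z^n$ of that conditional type the common value $e^{-n(\phi(P_X)+\bar H)+o(n)}$ with $\bar H\eqdef\sum_x P_X(x)\avgH{R_x}$, while the class has cardinality $e^{n\bar H+o(n)}$; hence $\sum_{z^n}\min\ge e^{-n\phi(P_X)+o(n)}$, so $\max_{s\in\calS}\mathbb{P}_s(g(Z^n)\neq s)\ge\frac12 e^{-n\phi(P_X)+o(n)}$ and $E_{\textnormal{d},w}\le\phi(P_X)+o(1)$.

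The routine parts are the union/Chernoff-bound computation in the achievability direction and the standard type-counting estimates; the crux is the converse's lower bound on $\sum_{z^n}\min(\cdot,\cdot)$ at the exact Chernoff rate, which requires the tilted-conditional-type construction, the stationarity identity for $\ell^\star$, and some care to approximate the real-valued $R_x$ by a genuine conditional type at finite $n$. Checking that this non-adaptive argument is consistent with (and subsumed by) the controlled-sensing statement of \cite{Nitinawarat2013} is then bookkeeping.
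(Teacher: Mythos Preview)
Your proposal is correct. The paper itself does not give a self-contained proof of this lemma: it simply states that the result is ``adapted from \cite[Theorem~1]{Nitinawarat2013}'' and uses it as a black box. Your plan is exactly the standard Chernoff--Stein argument underlying that reference, specialized to the open-loop (non-adaptive) case where the ``control'' sequence is the fixed codeword $\bfx$ of type $P_X$: union bound plus Chernoff tilting for achievability, and reduction to the hardest binary pair plus the tilted-conditional-type lower bound for the converse. So there is no methodological divergence to speak of --- you are filling in precisely the details the paper outsources to the citation. One minor point worth making explicit when you write it up: the interiority of $\ell^\star$ (hence the stationarity identity) holds whenever the pairwise term is strictly positive; if $\phi(P_X)=0$ for some pair $(s^\star,s'^\star)$ then $W_{Z|X,s^\star}(\cdot|x)=W_{Z|X,s'^\star}(\cdot|x)$ on the support of $P_X$, the two hypotheses are indistinguishable, and the converse bound $E_{\textnormal{d},w}\le 0$ is immediate, so the type-class construction is only needed in the nontrivial case.
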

Taking the union over all possible $P_X$ leads to the region in~\eqref{eqn:thm1_1}.
 
\subsection{Converse Proof of Theorem~\ref{thm:mono-capacity}}
\label{sec:converse}
Assume that the rate/detection-error exponent pair $(R,E)$ is achievable. Then, for all $\epsilon>0$, there exists $n$ sufficiently large and a codebook $\calC^{(m)}$ such that
\begin{align*}
    \frac{\log\card{\calC^{(m)}}}{n}&\geq R-\epsilon \\
    \max_{s\in\calS}\max_{w\in[1;M]}\mathbb{P}(h(Y^n)\neq w|W=w,S=s) &\leq \epsilon\\
    -\frac{1}{n}\log \max_{s\in\calS}\max_{w\in\intseq{1}{M}} \mathbb{P}(g(Z^n)\neq s|W=w,S=s)&\geq E-\epsilon.
\end{align*}
Since there is at most a polynomial number of types, there exists a set of types $\calT$ such that, for all $P_X\in\calT$,
the subcode $\calC_{P_X}\triangleq \{{f}(w):\hat{p}_{{f}(w)}=P_X\} \subset \calC^{(m)}$ satisfies
\begin{align}
    \max_{s\in\calS}\max_{w\in f^{-1}(\calC_{P_X})}\mathbb{P}(h(Y^n)\neq w|W=w,S=s) &\leq \epsilon
\end{align}
and
\begin{align}
    \frac{\log\card{\calC_{P_X}}}{n}>\frac{\log\card{\calC^{(m)}}}{n}-\delta\geq R-\epsilon-\delta \label{eq:lb-codebook-size}
\end{align}
for some $\delta$ vanishing with $\epsilon$.
Fix any $P_X\in\calT$. Let $$\widetilde{P}^n_X(\bfx)\eqdef\frac{1}{\card{\calC_{P_X}}}\sum_{\tilde{\bfx}\in\calC_{P_X}}\indic{\bfx=\tilde{\bfx}}.$$ Observe that
\begin{align}
\label{eq:avg-type}
  \overline{P}_X(x)\eqdef\frac{1}{n}\sum_{i=1}^n\widetilde{P}_{X_i}(x)=P_X(x),
\end{align}
where $\widetilde{P}_{X_i}$ is the $i$-th marginal distribution of $\widetilde{P}^n_X$. 
Then, for any state $s$,
\begin{align}
  (1-\epsilon)\log \card{\calC_{P_X}}-\Hb{\epsilon}&\labrel\leq{eq:Fano}\avgI{\widetilde{P}_X^n,W^{\otimes n}_{Y|X,s}}\nonumber\\
                        &\labrel={eq:ensemble-density}\avgH{Y^n}-\sum_{\bfx\in\calX^n}\widetilde{P}^n_X(\bfx)\avgH{Y^n|X^n=x^n,s}\nonumber\\
                        &=\avgH{Y^n}-\sum_{\bfx\in\calX^n}\widetilde{P}^n_X(x^n)\sum_{i=1}^n\avgH{Y_i|X^n=x^n,Y^{i-1},s}\nonumber\\
  &=\sum_{i=1}^n\avgH{Y_i|Y^{i-1}}-\sum_{i=1}^n\sum_{x_i\in\calX}\widetilde{P}_{X_i}(x_i)\avgH{Y_i|X_i=x_i, s}\nonumber\\
&\leq\sum_{i=1}^n\avgI{\widetilde{P}_{X_i}, W_{Y|X,s}}\nonumber\\
&\labrel\leq{eq:concavity} n\avgI{\overline{P}_X, W_{Y|X,s}}
\end{align}
where $\Hb{\cdot}$ is the binary entropy function, \eqref{eq:Fano} follows from Fano's inequality and standard techniques, \eqref{eq:ensemble-density} follows by identifying $Y^n\sim\widetilde{P}_X^n\circ W^{\otimes n}_{Y|X,s}$ and the definition of mutual information, and \eqref{eq:concavity} follows from the concavity of mutual information in the input distribution. Then, we obtain
\begin{align}
\label{eq:conv-sing-letter}
  \frac{\log \card{\calC_{P_X}}}{n}\leq\frac{\avgI{P_X, W_{Y|X,s}}+\frac{1}{n}}{1-\epsilon},
\end{align}
where we have used \eqref{eq:avg-type}.

Since \eqref{eq:conv-sing-letter} is valid for any $\epsilon$ and state $s$, the size of sub-codebook $\calC_{P_X}$ is upper bounded by the mutual information in a compound channel, i.e.,
\begin{align}
  \frac{\log \card{\calC_{P_X}}}{n}<\min_{s\in\calS}\mathbb{I}(P_X,W_{Y|X,s})+\tau \label{eq:boundrate}
\end{align}
for some $\tau>0$ vanishing with $\epsilon$.
On the other hand, for this $P_X\in\calT$,
\begin{align}
    &E-\epsilon\nonumber\\*
    &\leq-
    \frac{1}{n}\log\max_{s\in\calS}\max_{w\in\intseq{1}{M}} \mathbb{P}(g(Z^n)\neq s|W=w,S=s)\nonumber\\*
    &\stackrel{(a)}{\leq}-
    \frac{1}{n}\log \max_{s\in\calS}\max_{w\in f^{-1}(\calC_{P_X})} \mathbb{P}(g(Z^n)\neq s|W=w,S=s)\nonumber\\
    &\stackrel{(b)}{\leq} \phi(P_X) + \delta,\label{eq:boundexponent} 
\end{align}
where $(a)$ follows by restricting the set to only the terms corresponding to messages in $\calC_{P_X}$; $(b)$ follows since the detection error is upper bounded by $\phi(P_X)$ for any message with type $P_X$ by Lemma~\ref{lem:1}. Combining~\eqref{eq:boundrate} and~\eqref{eq:boundexponent} and choosing $P_X^*\eqdef\argmin_{P_X\in\calT}\phi(P_X)$, we conclude that for all $\epsilon>0$, there exist $\tau, \delta>0$ vanishing with $\epsilon$ such that
\begin{align}
  R&\leq \min_{s}\mathbb{I}(P_X^*,W_{Y|X,s})+\tau +\epsilon+\delta\\
  E &\leq \phi(P_X^*) + \delta + \epsilon.
\end{align}
Since $\epsilon,\tau,\delta$ can be chosen arbitrarily small as the block length $n$ goes to infinity, 
$E$ is upper bounded by $\phi(P_X)$ for some $P_X \in\mathcal{P}_{\calX}$ and the rate $R$ is achieved by  this $P_X$. Taking the union over all possible $P_X$ completes the result of converse of Theorem~\ref{thm:mono-capacity}.

\subsection{Proof of Theorem~\ref{thm:2}}


\subsubsection{Definition of the Code}
Before the protocol starts, we fix $n\in\bbN^*$, $\Delta_1,\Delta_2\in(0, 1)$ such that $\Delta_1+\Delta_2<1$, and we also fix some $|\mathcal{S}|$-tuple $\{P_{X,s''}\}_{s''\in\mathcal{S}}\in \left(\mathcal{P}_X^{(1-\Delta_1-\Delta_2)n}\right)^{ |\mathcal{S}|}$, where $\left(\mathcal{P}_X^{(1-\Delta_1-\Delta_2)n}\right)^{|\mathcal{S}|}$ is the set of tuples of types in $\mathcal{P}_{\mathcal{X}}^{(1-\Delta_1-\Delta_2)n},$
as well as the number of messages $M$ as
\begin{align}
    M = n(1-\Delta_1-\Delta_2) \times \left(\min_{s''\in\mathcal{S}} \mathbb{I}(P_{X,s''}, W_{Y|X,s''})-\delta\right)
\end{align}
for some $\delta>0$. 

We first define $g_{\textnormal{ML},i}$ as the maximum likelihood estimator at each time $i\in\mathbb{N}^*$, i.e., 
$$g_{\textnormal{ML},i}(x^{i},z^{i}) \triangleq \argmax_{s\in\calS} \prod_{\ell=1}^{i} W_{Z|X,s}(z_{\ell}|x_{\ell}).$$
Then, the code $\calC^{(m)}=(\{f_i\}_{i\in [1;n]},g^{(m)},h^{(m)})$ is defined through the following steps. 

\paragraph{Initial Estimation} We define $P_{X}^{\#}=\argmax_{P_X\in\mathcal{P}_{\calX}^{\Delta_1n}} \phi(P_X)$ and pick any length $n\Delta_1$ sequence $\mathbf{v}=(v_1,...,v_{\Delta_1 n})$ from the type class $\mathcal{T}_{P_X^{\#}}$. Then, for $1\leq i\leq \Delta_1 n$, the encoder is defined as \begin{align}
    f_i(w,z^{i-1}) = v_i
\end{align}
for all $w\in [1;M]$ and $z^{i-1}\in\calZ^{i-1}$. At time $\Delta_1 n+1$, the transmitter  estimates the state by using the maximum likelihood estimator $\tilde{s} = g_{\textnormal{ML},\Delta_1n }(x^{\Delta_1 n},z^{\Delta_1 n})$.

\paragraph{State Information Transmission} The transmitter then conveys the information of the estimated state to the receiver by encoding the estimated state $\tilde{s}$ into a codeword. Since $|\calS|$ does not grow with $n$, there exists a length $\Delta_2 n$ channel code $(\hat{f},\hat{g})$ with arbitrarily small error probability, where $\hat{f}:\calS\mapsto \calX^{\Delta_2 n}$ is the encoder and $\hat{g}:\calY^{\Delta_2 n}\mapsto \calS$ is the decoder. Denoting $\hat{\mathbf{x}}(\tilde{s}) = (\hat{x}_1(\tilde{s}),...,\hat{x}_{\Delta_2 n}(\tilde{s}))=\hat{f}(\tilde{s})$ the codeword corresponding to $\tilde{s}$. Then, for $\Delta_1 n < i \leq (\Delta_1+\Delta_2)n$, the encoder is defined as 
    \begin{align}
        f_i(w,z^{i-1}) = \hat{x}_{i-\Delta_1 n} (g_{\textnormal{ML},\Delta_1n}(x^{\Delta_1 n},z^{\Delta_1 n}))
    \end{align}
for all $w\in[1;M]$ and $z^{i-1}\in \calZ^{i-1}$. 
\paragraph{Message Transmission}
It is known that for every $\Bar{P}_X\in\mathcal{P}_{\calX}$, there exists a channel code with arbitrarily small error probability $\epsilon>0$ such that the rate is at least $\mathbb{I}(\Bar{P}_X,W_{Y|X,s})-2\tau$ for any $\tau>0$ when the channel state is $s\in\mathcal{S}$ and the number of channel uses is large enough.  Therefore, for any $s\in\mathcal{S}$, there exist an $((1-\Delta_1-\Delta_2)n,\epsilon)$ constant composition code with the type $P_{X,s}$ for the state $s$ channel with rate $\min_{s''\in\mathcal{S}} \mathbb{I}(P_{X,s''}, W_{Y|X,s''})-\delta$ for any $\epsilon>0$, where the set of types $\{P_{X,s''}\}_{s''\in\mathcal{S}}\in\left(\mathcal{P}_X^{(1-\Delta_1-\Delta_2)n}\right)^{ |\mathcal{S}|}$ is fixed at the beginning of the protocol. Let the channel code for the channel with state $s$ be characterized by  $(\tilde{f}_s,\tilde{h}_s)$, where  $\tilde{f}_s:[1;M]\mapsto \mathcal{X}^{(1-\Delta_1-\Delta_2)n}$ is the encoder and $\tilde{h}_s:\calY^{(1-\Delta_1-\Delta_2)n}\mapsto [1:M]$ is the decoder. Denoting $\tilde{\mathbf{x}}(w,s)=(\tilde{x}_1(w,s),...,\tilde{x}_{(1-\Delta_1-\Delta_2)n}(w,s)) = \tilde{f}_s(w)$ the codeword corresponding to the message $w\in[1;M]$. Then, for $(\Delta_1 + \Delta_2) n < i\leq n$, we define the encoder as
\begin{align}
    f_i(w,z^{i-1}) = \tilde{x}_{i-(\Delta_1+\Delta_2) n}(w,g_{\textnormal{ML},\Delta_1n}(x^{\Delta_1 n},z^{\Delta_1 n}))
\end{align}
for all $w\in[1;M]$ and $z^{i-1}\in\calZ^{i-1}$. 
\paragraph{Message Decoding and State Estimation}
Finally, the message decoder is 
\begin{align}
    h^{(m)}(y^n) = \tilde{h}_{\hat{g}\left(y_{\Delta_1 n +1}^{\Delta_2 n}\right)}\left(y_{(\Delta_1+\Delta_2) n+1}^n\right)
\end{align}
for all $y^n\in\mathcal{Y}^n$, and the state estimator is 
\begin{align}
    g^{(m)}(x^n,z^n) = \argmax_{s\in\calS} \prod_{\ell=(\Delta_1+\Delta_2)n+1}^{n} W_{Z|X,s}(z_{\ell}|x_{\ell})
    \label{eqn:state_estimator}
\end{align}
for all $x^n\in\mathcal{X}^n$ and $z^n\in\mathcal{Z}^n$. 
Note that we only use the subsequences $x_{(\Delta_1+\Delta_2)n+1}^n$ and $z_{(\Delta_1+\Delta_2)n+1}^n$ for the 
state estimation in \eqref{eqn:state_estimator}.

\subsubsection{Analysis of the Communication Rate and Detection-Error Exponent}
\paragraph{Rate Analysis}
For any $s\in\mathcal{S}$ any $w\in[1;M]$,  the error probability of communication is 
\begin{align}
    &\mathbb{P}(h(Y^n)\neq W|W=w,S=s) \nonumber\\
    &\leq \mathbb{P}(g_{\textnormal{ML},\Delta_1n}(X^{\Delta_1 n},Z^{\Delta_1 n})\neq s|W=s,S=s) + \mathbb{P}(\hat{h}(Y_{\Delta_1 n+1}^{\Delta_2 n})\neq s,\Tilde{s}=s|W=s,S=s) \nonumber\\
    &\phantom{===}+ \mathbb{P}\left(\tilde{h}_s\left(Y_{(\Delta_1+\Delta_2) n+1}^n\right)\neq w,\Tilde{s}=s,\hat{h}(Y_{\Delta_1 n+1}^{\Delta_2 n})= s\middle|W=w,S=s \right),
    \label{eqn:thm6_2_1}
\end{align}
where we have applied the union bound. 
The first term of \eqref{eqn:thm6_2_1} comes from the event in which the initial estimation of the state $\Tilde{s}$ is incorrect; the second term of \eqref{eqn:thm6_2_1} is the event in which the decoding of the initial estimated state $\Tilde{s}$ is incorrect; the last term of \eqref{eqn:thm6_2_1} is the decoding error probability. 
For any $s\in\mathcal{S}$ and $w\in[1;M]$, all terms on the right-hand side of \eqref{eqn:thm6_2_1} are arbitrarily small when $n$ is sufficiently large by our construction of the code. The rate of communication is 
\begin{align*}
    R &= 
    \frac{1}{n}\log e^{n(1-\Delta_1-\Delta_2)\left(\min_{s''\in\mathcal{S}} \mathbb{I}(P_{X,s''}, W_{Y|X,s''})-\delta\right)} = (1-\Delta_1-\Delta_2)\left(\min_{s''\in\mathcal{S}} \mathbb{I}(P_{X,s''}, W_{Y|X,s''})-\delta\right).
\end{align*}
By making $\Delta_1,\Delta_2$ and $\delta$ arbitrarily small, we conclude that $\min_{s''\in\mathcal{S}} \mathbb{I}(P_{X,s''}, W_{Y|X,s''})$ is achievable. 

\paragraph{Detection-Error Exponent Analysis}
The error probability of detection is 
\begin{align}
    P_d^{(n)} &= \max_{s\in\calS}\max_{w\in\intseq{1}{M}} \mathbb{P}(g^{(m)}(X^n,Z^n)\neq s|W=w,S=s) \nonumber\\
    &= \max_{s\in\calS}\max_{w\in\intseq{1}{M}} \mathbb{P}\left(\argmax_{s\in\calS} \prod_{\ell=(\Delta_1+\Delta_2)n+1}^{n} W_{Z|X,s}(z_{\ell}|x_{\ell}) \neq s \middle|W=w,S=s\right). 
    \label{eqn:thm6_2_2}
\end{align}
Note that $\argmax_{s\in\calS} \prod_{\ell=(\Delta_1+\Delta_2)n+1}^{n} W_{Z|X,s}(z_{\ell}|x_{\ell}) \neq s$ is the error event of applying the ML estimator when the type of the input sequence is $\hat{p}_{X_{(1-\Delta_1-\Delta_2)n+1}^n} = P_{X,\Tilde{s}}$. For all $t\in [n(\Delta_1+\Delta_2)+1;n]$, the channel input $X_t$ is chosen without using the feedback $Z_{n(\Delta_1+\Delta_2)+1}^n$. Moreover, for any $w\in[1;M]$, the sequence $X_{n(\Delta_1+\Delta_2)+1}^n$ has a constant type, which depends on $\Tilde{s}$. 
Therefore, we can apply Lemma~\ref{lem:1} and the law of total probability to upper bound the error probability of \eqref{eqn:thm6_2_2} as 
\begin{align}
    P_d^{(n)} &\leq \max_{s\in\mathcal{S}}\max_{w\in [1;M]}\sum_{s''\in\mathcal{S}}\mathbb{P}\left(\Tilde{s}=s''\middle|W=w,S=s\right) 
    \times \Theta_n(1)\times e^{-n(1-\Delta_1-\Delta_2)\psi_s(P_{X,s''})}\\
    &\leq \max_{s\in\mathcal{S}} \Theta_n(1)\times e^{-n(1-\Delta_1-\Delta_2)\min_{s''\in\mathcal{S}}\psi_s(P_{X,s''})}\label{eqn:thm6_2_3}\\
    &= \Theta_n(1)\times e^{-  n(1-\Delta_1-\Delta_2)\min_{s\in\mathcal{S}}\min_{s''\in\mathcal{S}}\psi_s(P_{X,s''})}
    \\
    &= \Theta_n(1)\times e^{- n(1-\Delta_1-\Delta_2)\min_{s''\in\mathcal{S}}\phi(P_{X,s''})},
    \label{eqn:thm6_2_4}
\end{align}
where in \eqref{eqn:thm6_2_3} we lower bound the exponent $\psi_s(P_{X,s''})$ by the minimum one and $\Theta_n(1)$ is some constant, and in \eqref{eqn:thm6_2_4} we swap the order of $\min_{s\in\mathcal{S}}$ and $\min_{s''\in\mathcal{S}}$ and apply the definition of $\phi$. 
Then, we have shown that 
\begin{align}
    E_d^{(n)} \geq (1-\Delta_1-\Delta_2)\min_{s\in\mathcal{S}}\phi(P_{X,s}). 
\end{align}
By taking the union over all possible choices of $\{P_{X,s''}\}_{s''\in\mathcal{S}}\in \left(\mathcal{P}_X^{(1-\Delta_1-\Delta_2)n}\right)^{ |\mathcal{S}|}$ and making $\Delta_1,\Delta_2$ arbitrarily small, we conclude that the following region is achievable. 
\begin{align}
    \bigcup_{\{P_{X,s''}\}_{s''\in\mathcal{S}}\in\left(\mathcal{P}_X^{(1-\Delta_1-\Delta_2)n}\right)^{ |\mathcal{S}|}} 
  \left\{\begin{array}{l}
           (R,E)\in \bbR_+^2: \\
           R \leq  \min_{s\in\mathcal{S}}\mathbb{I}(P_{X,s},W_{Y|X,s})\\
           E \leq \min_{s\in\mathcal{S}}\phi(P_{X,s})
         \end{array}
  \right\}
\end{align}
{To recover the result of Theorem~\ref{thm:2}, note that for any $P_X\in\mathcal{P}_{\mathcal{X}}$, there exists some type $\widehat{P}_X\in\calP_{\calX}^{(1-\Delta_1-\Delta_2)n}$ such that $\abs{P_X(x)-\widehat{P}_X(x)}\leq\eta$ for any $x\in\calX$ and $\eta>0$ for some $n$ large enough. A similar continuity argument can be used to guarantee that the rate and exponent derived by $\widehat{P}_X$ would be $\xi(\eta)$-close to the result obtained by $P_X$, where the difference $\xi(\eta)$ vanishes with $n$.} By taking the union of all $\{P_{X,s}\}_{s\in\mathcal{S}}$, we conclude that the closure of all achievable regions is at least 
\begin{align}
    \bigcup_{\{P_{X,s''}\}_{s''\in\mathcal{S}}\in \mathcal{P}_{\mathcal{X}}^{ |\mathcal{S}|}} 
  \left\{\begin{array}{l}
           (R,E)\in \bbR_+^2: \\
           R \leq  \min_{s\in\mathcal{S}}\mathbb{I}(P_{X,s},W_{Y|X,s})\\
           E \leq \min_{s\in\mathcal{S}}\phi(P_{X,s})
         \end{array}
  \right\},
\end{align}
which completes the proof.

\subsection{Proof of Theorem~\ref{thm:bi-static-result2}}
We first construct the code by the following steps. 
Fix a specific type $P_X$. The length $n$ codeword corresponds to the message $w\in[1;M]$ is $\mathbf{x}_w$ and is uniformly drawn from the type class $\mathcal{T}^n_{P_X}$. The message decoder and the state estimator are jointly defined as 
\begin{align}
    (\hat{w},\hat{s}) = \argmax_{w\in[1;M],s\in\mathcal{S}} \mathbb{P}\left(\mathbf{y}|S=s,X^n=\mathbf{x}_w\right). 
    \label{eqn:decoder_estimator}
\end{align}
The message decoder $h^{(b)}$ and the state estimator $g^{(b)}$ are then well-defined by (\ref{eqn:decoder_estimator}). Note that the codewords $\{\mathbf{x}_w\}_{w\in[1;M]}$ are random and so is the code $\calC^{(b)}$. Since we have fixed the definition of $h^{(b)}$ and $g^{(b)}$, with a slight abuse of notation, we denote $\calC^{(b)}=\{\bfx_{\ell}\}_{\ell\in[1;M]}$ as the set of codewords in the derivation below. 
We next derive the detection-error exponent when averaging over $\calC^{(b)}$, which codewords are drawn uniformly from $\calT^n_{P_X}$. 
\paragraph{Detection-error Analysis}
Without loss of generality we assume that the message $w=1$ is transmitted and the state $S=s$ for some $s\in\calS$.
The error event is the set of all received $\bfy$ that would result in detection errors and is defined as 
\begin{align*}
    \mathcal{E} \triangleq \{\bfy \in\calY^n:\max_{i\in[1;M]}\bbP(\bfy|S=s,X^n=\bfx_i)<\max_{j\in[1;M]} \bbP(\bfy|S=s',X^n=\bfx_j) \text{ for some }s'\neq s\}. 
\end{align*}
For any $\widehat{P}_{Y|X}\in \mathcal{P}_{\calY|\calX}^n$, it is known from \cite{Thomas_Cover} that the probability of receiving $\bfy\in\mathcal{T}_{\cdot|\bfx_1}(\widehat{P}_{Y|X})$ is upper bounded by 
\begin{align}
    \bbP\left(\bfy \in \calT_{\cdot|\bfx_1 }(\widehat{P}_{Y|X}) \middle| X^n=\bfx_1,S=s\right) \leq e^{-n\mathbb{D}(\widehat{P}_{Y|X}\Vert W_{Y|X,s}|P_X)}. 
    \label{eqn:prof_thm7_0}
\end{align}
For all $\bfy\in \calY^n$ and for any $k\neq 1$, we define the random variable 
\begin{align*}
&I_k(\bfy) \nonumber\\
&\triangleq \mathbf{1}\left(\max_{\ell\neq k}\bbP(\bfy|S=s,X^n=\bfx_{\ell}) \leq  \bbP(\bfy|S=s',X^n=\bfx_{k}) \text{ for some }s'\neq s \right)\\
&= \mathbf{1}\left(\exists s'\neq s \text{ s.t } e^{n\sum_{x,y}P_{X}(x)\widehat{p}_{\bfy|\bfx_k}(y|x)\log W_{Y|X,s'}(y|x)} \geq \max_{\ell\neq k} 
e^{n\sum_{x,y}P_{X}(x)\widehat{p}_{\bfy|\bfx_{\ell}}(y|x)\log W_{Y|X,s}(y|x)}
\right),
\end{align*}
as well as the variable 
\begin{align}
    J_1(\bfy) \triangleq \mathbf{1}\left(\exists s'\neq s \text{ s.t } e^{n\sum_{x,y}P_{X}(x)\widehat{p}_{\bfy|\bfx_1}(y|x)\log W_{Y|X,s'}(y|x)} \geq 
e^{n\sum_{x,y}P_{X}(x)\widehat{p}_{\bfy|\bfx_{1}}(y|x)\log W_{Y|X,s}(y|x)}\right). 
\end{align}
The randomness of the variables $I_k(\bfy)$ and $J_1(\bfy)$ comes from the random coding,  i.e., the type $\hat{p}_{\bfy|\bfx_k}$ is random. $I_k(\bfy)$ is equal to one when there exists some state $s'\neq s$ such that the likelihood calculated according to the state $s'$ and the codeword $k$ is greater than the maximal possible likelihood according to the state $s$.  
Then, the average detection error probability can be upper bounded by the following
\begin{align}
    &\mathbb{E}_{\calC^{(b)}}\left[\bbP\left(\hat{s}\neq s|X^n=\bfx_1,S=s\right)\right] \nonumber\\
    &\leq \frac{1}{|\calT_{P_X}^n|} \sum_{\bfx_1\in \calT_{P_X}^n}\sum_{\widehat{P}_{Y|X}^{(1)}}\sum_{\bfy\in \mathcal{T}_{\cdot|\bfx_1}(\widehat{P}_{Y|X}^{(1)})} \mathbb{P}(\bfy|X^n=\bfx_1,S=s)\mathbb{P}(\text{error}|X^n=\bfx_1,S=s,Y^n=\bfy)\\
    &\leq \frac{1}{|\calT_{P_X}^n|} \sum_{\bfx_1\in \calT_{P_X}^n}\sum_{\widehat{P}_{Y|X}^{(1)}}\sum_{\bfy\in \mathcal{T}_{\cdot|\bfx_1}(\widehat{P}_{Y|X}^{(1)})} \mathbb{P}(\bfy|X^n=\bfx_1,S=s)\sum_{k\neq 1} \mathbb{E}_{\calC^{(b)}\setminus \bfx_1}\left[I_k(\bfy)|X^n=\bfx_1,S=s\right]\nonumber\\
    &+ \frac{1}{|\calT_{P_X}^n|} \sum_{\bfx_1\in \calT_{P_X}^n}\sum_{\widehat{P}_{Y|X}^{(1)}}\sum_{\bfy\in \mathcal{T}_{\cdot|\bfx_1}(\widehat{P}_{Y|X}^{(1)})} \mathbb{P}(\bfy|X^n=\bfx_1,S=s)J_1(\bfy),
    \label{eqn:prof_thm7_1}
\end{align} 
where the expectation is taken over all possibilities of codewords $\{\bfx_{\ell}\}_{\ell\neq 1}$. 
Note that $I_k(\bfy)$ is random because the codewords $\{\bfx_{\ell}\}_{\ell\neq 1}$ are drawn from $\calT^n_{P_X}$ uniformly, and hence, $\{\hat{p}_{\bfy|\bfx_{\ell}}\}_{\ell\neq 1}$ are random, as well.

{
The difficulty of analyzing $I_k(\bfy)$ comes from the term $\max_{\ell\neq k} 
e^{n\sum_{x,y}P_{X}(x)\widehat{p}_{\bfy|\bfx_{\ell}}(y|x)\log W_{Y|X,s}(y|x)}$, which involves a maximization over exponentially many indices $\ell$. Observe that \begin{align}
   \max_{\ell\neq k}e^{n\sum_{x,y}P_{X}(x)\widehat{p}_{\bfy|\bfx_{\ell}}(y|x)\log W_{Y|X,s}(y|x)} = \max_{\ell\neq k} e^{-n\left(\mathbb{D}(\widehat{p}_{\bfy|\bfx_{\ell}}\Vert W_{Y|X,s}|P_X) + \mathbb{H}(\widehat{p}_{\bfy|\bfx_{\ell}}|P_X)\right)}.
   \label{eqn:prof_7_7}
\end{align}
Note that the type $\widehat{p}_{\bfy|\bfx_{1}}$ is fixed when $\bfy\in \mathcal{T}_{\cdot|\bfx_1}(\widehat{P}_{Y|X}^{(1)})$ for some $\widehat{P}_{Y|X}^{(1)} \in \mathcal{P}_{\calY|\calX}^n$. Therefore, in the following we define a set $\mathcal{A}_k$ of conditional types $\left\{\{\widehat{P}_{Y|X}^{(\ell)}\}_{\ell\notin \{1,k\}}\right\}\in \left( \mathcal{P}_{\calY|\calX}^n\right)^{2^{nR}-2}$. 
 such that when $\bfx_{\ell} \in \mathcal{T}_{\bfy|\cdot}(\widehat{P}_{Y|X}^{(\ell)})$ for all $\ell\neq \{1,k\}$, we can control the right-hand side of \eqref{eqn:prof_7_7}. Moreover, $\mathcal{A}_k$ needs to have} the property that $\left\{\{\widehat{p}_{\bfy|\bfx_{\ell}}\}_{\ell\notin \{1,k\}}\right\}\in \mathcal{A}_k$ with high probability. 
Specifically, we define 
\begin{align*}
    &\calA_k(\widehat{P}_{Y|X}^{(1)},P_X,R)\triangleq \left\{\{\widehat{P}_{Y|X}^{(\ell)}\}_{\ell\notin \{1,k\}}\in\left(\calP_{\calY|\calX}^n\right)^{2^{nR}-2}:\forall P''\in \calP_{\calY|\calX}^n \text{ s.t. } \mathbb{I}(P_X,P'') < R\right.\\
    &\quad\left. \text{ and }P_X\circ P''= P_X \circ \widehat{P}_{Y|X}^{(1)} \: \exists j\notin \{1,k\} \text{ s.t } \widehat{P}_{Y|X}^{(j)}=P'', \widehat{P}_{Y|X}^{(j)}\circ P_X = \widehat{P}_{Y|X}^{(1)} \circ P_X \:\forall j\notin\{1,k\}\right\}, 
\end{align*}
as the set of conditional types such that for all $P''\in\calP_{\calY|\calX}^n$ satisfying $\avgI{P_X,P''}<R$ there exists some index $j\notin \{1,k\}$ such that the corresponding conditional type $\widehat{P}_{Y|X}^{(j)}=P''$. The  constraint $\widehat{P}_{Y|X}^{(j)}\circ P_X = \widehat{P}_{Y|X}^{(1)} \circ P_X$ for all $j\notin\{1,k\}$ comes from the fact that when $\bfy\in \mathcal{T}_{\cdot|\bfx_1}(\widehat{P}_{Y|X}^{(1)})$, the type of $\bfy$ is $\widehat{P}_{Y|X}^{(1)} \circ P_X$, and hence $\bbP(\bfx_{\ell}\in \calT_{\bfy|\cdot}(\widehat{P}_{Y|X}^{(\ell)}))=0$ if $\widehat{P}_{Y|X}^{(\ell)}\circ P_X \neq \widehat{P}_{Y|X}^{(1)} \circ P_X$.

Then, by using the law of total probability, we have 
\begin{align*}
    &\mathbb{E}_{\calC^{(b)}\setminus \bfx_1}\left[I_k(\bfy)|X^n=\bfx_1,S=s\right] \nonumber\\
    &= \sum_{\{\widehat{P}_{Y|X}^{(\ell)}\}_{\ell\notin \{1,k\}}} \prod_{\ell\notin \{1,k\}} \bbP(\bfx_{\ell}\in \calT_{\bfy|\cdot}(\widehat{P}_{Y|X}^{(\ell)})) \bbE_{\bfx_k} \left[ I_k\left(\bfy\right)\middle|X^n=\bfx_1,S=s,\widehat{p}_{\bfy|\bfx_{\ell}}=\widehat{P}_{Y|X}^{(\ell)} \forall \ell \notin\{1,k\}\right]\\
    &\leq \sum_{\substack{\{\widehat{P}_{Y|X}^{(\ell)}\}_{\ell\notin \{1,k\}}\\
    \in \calA_k(\widehat{P}_{Y|X}^{(1)},P_X,R)}}  \prod_{\ell\notin \{1,k\}} \bbP(\bfx_{\ell}\in \calT_{\bfy|\cdot}(\widehat{P}_{Y|X}^{(\ell)})) \bbE_{\bfx_k} \left[ I_k\left(\bfy\right)\middle|X^n=\bfx_1,S=s,\widehat{p}_{\bfy|\bfx_{\ell}}=\widehat{P}_{Y|X}^{(\ell)} \forall \ell \notin\{1,k\}\right]\\
    &\quad+ \sum_{\{\widehat{P}_{Y|X}^{(\ell)}\}_{\ell\notin \{1,k\}}\notin \calA_k(\widehat{P}_{Y|X}^{(1)},P_X,R)} \prod_{\ell\notin \{1,k\}} \bbP(\bfx_{\ell}\in \calT_{\bfy|\cdot}(\widehat{P}_{Y|X}^{(\ell)})). 
\end{align*}
Note that when $\{\widehat{P}_{Y|X}^{(\ell)}\}_{\ell\notin \{1,k\}}\in \calA_k(\widehat{P}_{Y|X}^{(1)},P_X,R)$, it holds that 
\begin{align}
    &\min_{\ell\notin \{k,1\} }\bbD(\widehat{P}_{Y|X}^{(\ell)}\Vert W_{Y|X,s}|P_X) + \bbH(\widehat{P}_{Y|X}^{(\ell)}|P_X) \nonumber\\
    &\label{eq:Ak-min}\leq \min_{P''\in \calP_{\calY|\calX}^n:\mathbb{I}(P_X,P'')<R,P_X\circ P''=P_X\circ \widehat{P}_{Y|X}^{(1)}} \bbD(P''\Vert W_{Y|X,s}|P_X) + \bbH(P''|P_X)\\
    &\triangleq \beta_n(\widehat{P}_{Y|X}^{(1)},P_X,R,s),
    \label{eqn:prof_7_8}
\end{align}
{where \eqref{eq:Ak-min} follows since for every $P^{\prime\prime}\in \mathcal{P}_{\calY|\calX}^n$, there is some $\widehat{P}^{(\ell)}_{Y|X}$ such that $\widehat{P}^{(\ell)}_{Y|X}=P''$ by definition of $\mathcal{A}_k.$}
Then, for $\bfy\in \calT_{\cdot|\bfx_1}(\widehat{P}^{(1)}_{Y|X})$, it holds that 
\begin{align*}
&\max_{\ell\neq k} 
e^{n\sum_{x,y}P_{X}(x)\widehat{p}_{\bfy|\bfx_{\ell}}(y|x)\log W_{Y|X,s}(y|x)} \nonumber\\
&\geq \max\left(e^{-n\beta_n(\widehat{P}_{Y|X}^{(1)},P_X,R,s)},e^{-n \bbD( \widehat{P}^{(1)}_{Y|X}\Vert W_{Y|X,s}|P_X) + \bbH( \widehat{P}^{(1)}_{Y|X}|P_X)}\right)
\end{align*}
by applying the right-hand side of \eqref{eqn:prof_7_7} and \eqref{eqn:prof_7_8}.
Therefore, when $\{\widehat{P}_{Y|X}^{(\ell)}\}_{\ell\notin \{1,k\}}\in \calA_k(\widehat{P}_{Y|X}^{(1)},P_X,R)$ and $\bfy\in \calT_{\cdot|\bfx_1}(\widehat{P}_{Y|X}^{(1)})$, the indicator function $I_k$ can be upper bounded by 

\begin{align}
    I_k(\bfy) 
    &\leq \sum_{s'\neq s}\mathbf{1}\left(\bfx_k\in \calT_{\bfy|.}(P_{Y|X}')\text{ for some }P_{Y|X}'\in \calP^n_{s,s'}(\widehat{P}_{Y|X}^{(1)},P_X,R) \right),
\end{align}
where
\begin{align*}
    &\calP_{s,s'}^n(\widehat{P}_{Y|X}^{(1)},P_X,R) \triangleq \Bigg\{P_{Y|X}'\in\calP_{\calY|\calX}^n: e^{n\sum_{x,y}P_{X}(x)P_{Y|X}'(y|x)\log W_{Y|X,s'}(y|x)} \nonumber\\
    &\hspace{0.5cm}\geq e^{-n\min\left(\beta_n(\widehat{P}_{Y|X}^{(1)},P_X,R,s),\mathbb{D}(\widehat{P}_{Y|X}^{(1)}\Vert W_{Y|X,s}|P_X)+\mathbb{H}(\widehat{P}_{Y|X}^{(1)}|P_X)\right)}, P_{Y|X}'\circ P_X = \widehat{P}_{Y|X}^{(1)}\circ P_X\Bigg\}.
\end{align*}
Then, 
\begin{align}
    &\sum_{\substack{\{\widehat{P}_{Y|X}^{(\ell)}\}_{\ell\notin \{1,k\}}\\
    \in \calA_k(\widehat{P}_{Y|X}^{(1)},P_X,R)}} \prod_{\ell\notin \{1,k\}} \bbP(\bfx_{\ell}\in \calT_{\bfy|\cdot}(\widehat{P}_{Y|X}^{(\ell)})) \bbE_{\bfx_k} \left[ I_k\left(\bfy\right)\middle|X^n=\bfx_1,S=s,\widehat{p}_{\bfy|\bfx_{\ell}}=\widehat{P}_{Y|X}^{(\ell)} \forall \ell \notin\{1,k\}\right]\nonumber\\
    &\leq \sum_{s'\neq s}\sum_{P_{Y|X}'\in\calP_{s,s'}^n(\widehat{P}_{Y|X}^{(1)},P_X,R)} \mathbb{P}\left(\bfx_k\in \calT_{\bfy|\cdot}(P_{Y|X}')\right)\\
    &\leq \exp\left(-n\min_{s'\neq s}\min_{P_{Y|X}'\in\calP_{s,s'}^n(\widehat{P}_{Y|X}^{(1)},P_X,R)}(\mathbb{I}(P_X,P_{Y|X}')-o_n(1))\right),
    \label{eqn:prof_thm7_2}
\end{align}
where we have used the fact~\cite[(41)]{GallagerComposition} that $$\P{\bfx_k\in\calT_{\bfy|\cdot}(P^\prime_{Y|X})}=\frac{\abs{\calT^n_{P_XP^\prime_{Y|X}}}}{\abs{\calT^n_{P_X}}\abs{\calT^n_{P_X\circ P^\prime_{Y|X}}}}\leq \exp\left(-n\mathbb{I}(P_X,P^\prime_{Y|X})\right),$$ and $\card{\calP_{s,s'}^n(\widehat{P}^{(1)}_{Y|X},P_X,R)}\leq \text{poly}(n)$.
Moreover, for all $P''\in\calP_{\calY|\calX}^n$ such that $\avgI{P_X,P''}<R$ and for all $\ell\notin\{1,k\}$, it holds that
\begin{align}
\mathbb{P}\left(\bfx_{\ell}\notin \calT_{\bfy|\cdot}(P'')\right) 
&\leq 1-e^{-n(\avgI{P_X,P''}+o_n(1))}
\label{eqn:prof_thm7_2_0}
\\
&\leq e^{-e^{-n(\xi+o_n(1))}},
\label{eqn:prof_thm7_2_1}
\end{align}
for some $\xi<R$, where in \eqref{eqn:prof_thm7_2_0} we lower bound $\P{\bfx_k\in\calT_{\bfy|\cdot}(P^{\prime\prime}_{Y|X})}$ by $e^{-n(\avgI{P_X,P''}+o_n(1))}$~\cite[(14)]{GallagerComposition} and in \eqref{eqn:prof_thm7_2_1} we use the fact that $1-x\leq e^{-x}$ for all $x\in\mathbb{R}$.  Then, 
\begin{align}
\label{eq:type-not-in-Ak}
    \quad\sum_{\substack{\{\widehat{P}_{Y|X}^{(\ell)}\}_{\ell\notin \{1,k\}}\\\notin \calA_k(\widehat{P}_{Y|X}^{(1)},P_X,R)}} \prod_{\ell\notin \{1,k\}} \bbP(\bfx_{\ell}\in \calT_{\bfy|\cdot}(\widehat{P}_{Y|X}^{(\ell)}))&\leq\sum_{\substack{P^{''}\in \calP_{\calY|\calX}^n:\mathbb{I}(P_X,P'')<R,\\P_X\circ P''=P_X\circ \widehat{P}_{Y|X}^{(1)}}}\prod_{\ell\notin \{1,k\}} \bbP(\bfx_{\ell}\notin \calT_{\bfy|\cdot}(P^{''}))\\ 
    &\leq\card{\calP_{\calY|\calX}^n}\left(e^{-e^{-n(\xi-o_n(1))}}\right)^{e^{nR}}\\
    &\leq e^{n o_n(1)}\times e^{-e^{-n(\xi-R-o_n(1))}},
\end{align}
which has the double exponential form and decays much faster than the right-hand side of \eqref{eqn:prof_thm7_2}, {where \eqref{eq:type-not-in-Ak} follows since $\{\widehat{P}_{Y|X}^{(\ell)}\}_{\ell\notin \{1,k\}}\notin \calA_k(\widehat{P}_{Y|X}^{(1)},P_X,R)$ implies that there exists some $P^{''}\in \calP_{\calY|\calX}^n$ such that $\mathbb{I}(P_X,P'')<R,P_X\circ P''=P_X\circ \widehat{P}_{Y|X}^{(1)}$ but no $\bfx_\ell$ lies in $\calT_{\bfy|\cdot}(P^{\prime\prime})$.} Therefore, for $\bfy\in\calT_{\cdot|\bfx_1}(\widehat{P}_{Y|X}^{(1)})$, it holds that 
\begin{align}
    \mathbb{E}_{\calC^{(b)}\setminus \bfx_1}\left[I_k(\bfy)|X^n=\bfx_1,S=s\right] \leq  \exp\left(-n\min_{s'\neq s}\min_{P_{Y|X}'\in\calP_{s,s'}^n(\widehat{P}_{Y|X}^{(1)},P_X,R)}(\mathbb{I}(P_X,P_{Y|X}')-o_n(1))\right).
    \label{eqn:prof_thm7_3}
\end{align}
By applying inequality (\ref{eqn:prof_thm7_3}) to the first term on the right-hand side of (\ref{eqn:prof_thm7_1}) and using (\ref{eqn:prof_thm7_0}), we obtain
\begin{align}
    &\frac{1}{|\calT_{P_X}^n|} \sum_{\bfx_1\in \calT_{P_X}^n}\sum_{\widehat{P}_{Y|X}^{(1)}}\sum_{\bfy\in \mathcal{T}_{\cdot|\bfx_1}(\widehat{P}_{Y|X}^{(1)})} \mathbb{P}(\bfy|X^n=\bfx_1,S=s)\sum_{k\neq 1} \mathbb{E}_{\calC^{(b)}\setminus \bfx_1}\left[I_k(\bfy)|X^n=\bfx_1,S=s\right] \nonumber\\
    &\leq \text{poly}(n) \nonumber\\
    &\times \exp\left(-n\min_{\widehat{P}_{Y|X}^{(1)}}\left( \mathbb{D}(\widehat{P}_{Y|X}^{(1)}\Vert W_{Y|X,s}|P_X)+\min_{s'\neq s}\min_{P_{Y|X}'\in\calP_{s,s'}^n(\widehat{P}_{Y|X}^{(1)},P_X,R)}(\mathbb{I}(P_X,P_{Y|X}')-o_n(1))-R\right)\right).
    \label{eqn:prof_thm7_4}
\end{align}
Besides, the term 
$$
    \frac{1}{|\calT_{P_X}^n|} \sum_{\bfx_1\in \calT_{P_X}^n}\sum_{\widehat{P}_{Y|X}^{(1)}}\sum_{\bfy\in \mathcal{T}_{\cdot|\bfx_1}(\widehat{P}_{Y|X}^{(1)})} \mathbb{P}(\bfy|X^n=\bfx_1,S=s)J_1(\bfy)
$$
is the detection-error probability by using the ML detector when the codeword is known. From Lemma~\ref{lem:1}, we have
\begin{align}
    \frac{1}{|\calT_{P_X}^n|} \sum_{\bfx_1\in \calT_{P_X}^n}\sum_{\widehat{P}_{Y|X}^{(1)}}\sum_{\bfy\in \mathcal{T}_{\cdot|\bfx_1}(\widehat{P}_{Y|X}^{(1)})} \mathbb{P}(\bfy|X^n=\bfx_1,S=s)J_1(\bfy) \leq e^{-n\phi(P_X)}
    \label{eqn:prof_thm7_5}
\end{align}
Combining (\ref{eqn:prof_thm7_1}), (\ref{eqn:prof_thm7_4}), (\ref{eqn:prof_thm7_5}), we have 
\begin{align}
    \mathbb{E}_{\calC^{(b)}}\left[\bbP\left(\hat{s}\neq s|X^n=\bfx_1,S=s\right)\right] \leq e^{-n\min(\phi(P_X),\rho_{\text{joint}}(P_X,R)-\epsilon(n))},
    \label{eqn:prof_thm7_6}
\end{align}
where $\epsilon(n)\rightarrow 0$ when $n\rightarrow\infty$. {The definition of $\rho_{\textnormal{joint}}(P_X,R)$ involves the optimization over conditional distributions, i.e., $\min_{\widehat{P}\in\mathcal{P}_{Y|X}}$ and $\min_{P'\in\mathcal{P}_{s,s'}(\widehat{P},P_X,R)}$, but the result we obtained in \eqref{eqn:prof_thm7_4}
involves optimization over conditional types of length $n$ sequences. However, when $n$ is sufficiently large, any conditional distribution in $\mathcal{P}_{\calY|\calX}$ can be approximated by conditional types in $\mathcal{P}_{\calY|\calX}^n$ with some deviation $\delta(n)$ with $\lim_{n\rightarrow\infty} \delta(n)=0$.  Since all the functions in the definition of $\rho_{\textnormal{joint}}(P_X,R)$ is continuous in the conditional distribution, the difference between the right-hand side of \eqref{eqn:prof_thm7_4} and $\rho_\textnormal{joint}(R_X,R)$ can be bounded by some $\epsilon(\delta(n))$ and again vanishes with $n$.}
The right-hand side of (\ref{eqn:prof_thm7_6}) is irrelevant to the message and the state, and hence,
\begin{align}
    \mathbb{E}_{\calC^{(b)}}\left[\frac{1}{M}\sum_{w}\bbP\left(\hat{s}\neq s|X^n=\bfx_w,S=s\right)\right] \leq e^{-n\min(\phi(P_X),\rho_{\text{joint}}(P_X,R)-\epsilon(\delta))}
\end{align}
for all $s\in\calS$ by the linearity of expectation.
\paragraph{Communication-error Analysis}
By \cite[Lemma 10.1 and Theorem 10.2]{Csiszar2011}, we have that,
\begin{align*}
    \E[\calC^{(b)}]{\max_{s}\max_w \bbP(\hat{w}\neq w|X^n=\bfx_w,S=s)}\leq e^{-n\left(\rho_{\text{succ}}(P_X,R)-\kappa\right)},
\end{align*}
where $\kappa$ vanishes with $n$.
\paragraph{Derandomization and Expurgation}
By the Markov's inequality, we have for any $\zeta_1>0$
\begin{align}
    &\bbP_{\calC^{(b)}}\Bigg(\exists s\in\calS\text{ s.t }\frac{1}{M}\sum_{w}\bbP\left(\hat{s}\neq s|X^n=\bfx_w,S=s\right) > e^{-n(\min(\phi(P_X),\rho_\text{joint}(P_X,R)-\epsilon(\delta))-\zeta_1)}\nonumber\\
    &\hspace{2cm} \text{ or } \max_{s}\max_w \bbP(\hat{w}\neq w|X^n=\bfx_w,S=s) \geq e^{-n(\rho_\text{succ}(P_X,R)-\zeta_2)}
     \Bigg) \nonumber\\
    &\leq |\calS|\frac{\mathbb{E}_{\calC^{(b)}}\left[\frac{1}{M}\sum_{w}\bbP\left(\hat{s}\neq s|X^n=\bfx_w,S=s\right)\right]}{e^{-n\left(\min(\phi(P_X),\rho_\text{joint}(P_X,R)-\epsilon(\delta))-\zeta_1\right)}} + \frac{\bbE_{\calC^{(b)}}\left[\max_{s}\max_w \bbP(\hat{w}\neq w|X^n=\bfx_w,S=s)\right]}{ e^{-n(\rho_\text{succ}(P_X, R)-\zeta_2)}}\\
    &\leq |\calS|e^{-n\zeta_1} + e^{-n\zeta_2},
\end{align}
which goes to zero when $n\rightarrow\infty$. Therefore, there exists some code $\calC^{(b)}$ such that 
\begin{align}
     \max_{s}\frac{1}{M}\sum_{w=1}^M\bbP\left(\hat{s}\neq s|X^n=\bfx_w,S=s\right) < e^{-n\left( \min(\phi(P_X),\rho_\text{joint}(P_X,R)-\epsilon(\delta))-\zeta_1\right)},\\
     \max_{s}\max_w \bbP(\hat{w}\neq w|X^n=\bfx_w,S=s) < e^{-n(\rho_\text{succ}(P_X, R)-\zeta_2)}
\end{align}
when $n$ is sufficiently large. By the codebook expurgation argument, there exists a code $\Bar{C}^{(b)}\triangleq\{\Bar{\bfx}_w\}\subset{\calC^{(b)}}$ with size $|\Bar{C}^{(b)}|=|\calC^{(b)}|/2$ such that 
\begin{align}
     \max_{s}\max_{w}\bbP\left(\hat{s}\neq s|X^n=\Bar{\bfx}_w,S=s\right) < 2e^{-n\left( \min(\phi(P_X),\rho_\text{joint}(P_X,R)-\epsilon(\delta))-\zeta_1\right)},\\
      \max_{s}\max_w \bbP(\hat{w}\neq w|X^n=\bfx_w,S=s) < e^{-n(\rho_\text{succ}(P_X, R)-\zeta_2)}.
\end{align}
For every $R<\min_s\mathbb{I}(P_X,W_{Y|X,s})$, there exists some $\zeta_2$ small enough such that $\rho_\text{succ}(P_X,R)-\zeta_2>0$. Therefore, for every $R<\min_s\mathbb{I}(P_X,W_{Y|X,s})$ there exists a code $\Bar{C}^{(b)}$ with rate $R-o_n(1)$ such that 
\begin{align}
    -\frac{1}{n} \log \max_{s}\max_{w}\bbP\left(\hat{s}\neq s|X^n=\Bar{\bfx}_w,S=s\right) \geq  \min(\phi(P_X),\rho_\text{joint}(P_X,R))-\epsilon,\\
    \max_{s}\max_w \bbP(\hat{w}\neq w|X^n=\Bar{\bfx}_w,S=s) < \epsilon
\end{align}
for any $\epsilon>0$ whenever $n$ is sufficiently large. 
The theorem is proved by taking the union over all possible $P_X$. 

\subsection{Proof of Corollary~\ref{Cor:3}}
Recall that the definition of $\rho_{\text{joint}}(P_X,R)$ is
\begin{align}
    \rho_{\text{joint}}(P_X, R) \triangleq \min_{s\in\calS}\min_{\widehat{P}}\left( \mathbb{D}(\widehat{P}\Vert W_{Y|X,s}|P_X)+\left|\min_{s'\neq s}\min_{P'\in\calP_{s,s'}(\widehat{P},P_X,R)}\mathbb{I}(P_X,P')-R\right|^{+}\right).
\end{align}
For any $\widehat{P}$, $P_X$, $s\in\calS$ and $s'\neq s$, the definition of $\calP_{s,s'}(\widehat{P},P_X,R)$ is
\begin{align*}
    &\calP_{s,s'}(\widehat{P},P_X,R) \triangleq \Bigg\{P': 
    \mathbb{D}(P'\Vert W_{Y|X,s'}|P_X) + \mathbb{H}(P'|P_X) \nonumber\\
    &\leq \min\left(\beta(\widehat{P},P_X,R,s),\mathbb{D}(\widehat{P}\Vert W_{Y|X,s}|P_X)+\mathbb{H}(\widehat{P}|P_X)\right), P_X\circ P' =  P_X\circ \widehat{P}\Bigg\}. 
\end{align*}
The value of $\min\left(\beta(\widehat{P},P_X,R,s),\mathbb{D}(\widehat{P}\Vert W_{Y|X,s}|P_X)+\mathbb{H}(\widehat{P}|P_X)\right)$ depends on $\widehat{P}$, $s$ and $P_X$. Therefore, for each $s\in\calS$ and $P_X$, we partition the set $\calP_{\calY|\calX}$ into $\calP_{\calY|\calX}'(P_X,s)$ and $\calP_{\calY|\calX}''(P_X,s)$, where 
\begin{align*}
    \calP_{\calY|\calX}'(P_X,R,s) &\triangleq \{\widehat{P}\in\calP_{\calY|\calX}:\max_{s''\in\calS}\beta(\widehat{P},P_X,R,s'') \geq \mathbb{D}(\widehat{P}\Vert W_{Y|X,s}|P_X)+\mathbb{H}(\widehat{P}|P_X)\}\\
    \calP_{\calY|\calX}''(P_X,R,s) &\triangleq \{\widehat{P}\in\calP_{\calY|\calX}:\max_{s''\in\calS}\beta(\widehat{P},P_X,R,s'') < \mathbb{D}(\widehat{P}\Vert W_{Y|X,s}|P_X)+\mathbb{H}(\widehat{P}|P_X)\}. 
\end{align*}
For each $s\in\calS$, and $P_X$, we define 
\begin{align}
    \widehat{\rho}(P_X, R,s,\widehat{P}) \triangleq 
    \mathbb{D}(\widehat{P}\Vert W_{Y|X,s}|P_X)+\left|\min_{s'\neq s}\min_{P'\in\calP_{s,s'}(\widehat{P},P_X,R)}\mathbb{I}(P_X,P')-R\right|^{+}. 
\end{align}
The exponent $\rho_{\text{joint}}(P_X, R)$ is then 
\begin{align}
    \rho_{\text{joint}}(P_X, R) = \min_{s\in\calS}\min\left(\min_{\widehat{P}\in\calP_{\calY|\calX}'(P_X,R,s)}\widehat{\rho}(P_X, R,s,\widehat{P}),\min_{\widehat{P}\in\calP_{\calY|\calX}''(P_X,R,s)}\widehat{\rho}(P_X,R,s,\widehat{P})\right). 
\end{align}
We first analyze the term $\min_s\min_{\widehat{P}\in\calP_{\calY|\calX}'(P_X,R,s)}\widehat{\rho}(P_X, R,s,\widehat{P})$. 
Let the arguments of the minimization of the term 
$$
    \min_{\widehat{P}\in\calP_{\calY|\calX}'(P_X,R,s)} \mathbb{D}(\widehat{P}\Vert W_{Y|X,s}|P_X)+\left|\min_{s'\neq s}\min_{P'\in \calP_{s,s'}(\widehat{P},P_X,R)}\mathbb{I}(P_X,P')-R\right|^{+}. 
$$
be achieved by the tuple $(\widehat{P},s',P')$, where $\widehat{P}$, $s'$ and $P'$ are the minimizer corresponding to $\min_{\widehat{P}\in\calP_{\calY|\calX}'(P_X,R,s)}$, $\min_{s'\neq s}$ and $\min_{P'\in \calP_{s,s'}(\widehat{P},P_X,R)}$, respectively. Given the tuple $(\widehat{P},s',P')$, if $\mathbb{I}(P_X,P')\geq \mathbb{I}(P_X,\widehat{P})$, then 
\begin{align*}
    & \min_{\widehat{P}\in\calP_{\calY|\calX}'(P_X,R,s)} \mathbb{D}(\widehat{P}\Vert W_{Y|X,s}|P_X)+\left|\min_{s'\neq s}\min_{P'\in \calP_{s,s'}(\widehat{P},P_X,R)}\mathbb{I}(P_X,P')-R\right|^{+}\\
    &= \mathbb{D}(\widehat{P}\Vert W_{Y|X,s}|P_X)+\left|\mathbb{I}(P_X,P')-R\right|^{+}\\
    &\geq \mathbb{D}(\widehat{P}\Vert W_{Y|X,s}|P_X)+\left|\mathbb{I}(P_X,\widehat{P})-R\right|^{+}\\
    &\geq \min_{\widehat{P}\in\calP_{\calY|\calX}'(P_X,R,s)} \mathbb{D}(\widehat{P}\Vert W_{Y|X,s}|P_X)+\left|\mathbb{I}(P_X,\widehat{P})-R\right|^{+}.
\end{align*}
On the other hand, if the minimizer $P'$ satisfies $\mathbb{I}(P_X,P')<\mathbb{I}(P_X,\widehat{P})$, then by definition of $\calP_{s,s'}(\widehat{P},P_X)$ and the fact that $\widehat{P}\in\calP_{\calY|\calX}'(P_X,R,s)$ it holds that 
$$
    \mathbb{D}(\widehat{P}\Vert W_{Y|X,s}|P_X) \geq \mathbb{D}(P'\Vert W_{Y|X,s'}|P_X)
$$
Moreover, since $P'\in \calP_{s,s'}(\widehat{P},P_X,R)$, it holds that 
\begin{align}
    \mathbb{D}(P'\Vert W_{Y|X,s'}|P_X) + \mathbb{H}(P'|P_X) &\leq \beta(\widehat{P},P_X,R,s) \\
    &= \beta(P',P_X,R,s)\label{eqn:cor_3_1}\\
    &\leq \max_{s''\in\calS} \beta(P',P_X,R,s'')
    \label{eqn:cor_3_2},
\end{align}
where in \eqref{eqn:cor_3_1} we use the fact that $P_X\circ P'=P_X\circ \widehat{P}$. 
Therefore, by definition of $\calP'_{\calY|\calX}$ and \eqref{eqn:cor_3_2}, it holds that $P'\in \calP_{\calY|\calX}'(P_X,R,s')$. 
Then, 
\begin{align}
    & \min_{\widehat{P}\in\calP_{\calY|\calX}'(P_X,R,s)} \mathbb{D}(\widehat{P}\Vert W_{Y|X,s}|P_X)+\left|\min_{s'\neq s}\min_{P'\in \calP_{s,s'}(\widehat{P},P_X)}\mathbb{I}(P_X,P')-R\right|^{+}\\
    &=\mathbb{D}(\widehat{P}\Vert W_{Y|X,s}|P_X)+\left|\mathbb{I}(P_X,P')-R\right|^{+}\\
    &\geq \mathbb{D}(P'\Vert W_{Y|X,s'}|P_X)+\left|\mathbb{I}(P_X,P')-R\right|^{+}\\
    &\geq \min_{P'\in \calP_{\calY|\calX}'(P_X,R,s')}\mathbb{D}(P'\Vert W_{Y|X,s'}|P_X)+\left|\mathbb{I}(P_X,P')-R\right|^{+}.
\end{align}
Taking the minimum over both cases of $\mathbb{I}(P_X,P')\geq \mathbb{I}(P_X,\widehat{P})$ and $\mathbb{I}(P_X,P')<\mathbb{I}(P_X,\widehat{P})$ and all possible $s\in\calS$, we obtain that 
\begin{align}
    \min_s\min_{\widehat{P}\in\calP_{\calY|\calX}'(P_X,R,s)}\widehat{\rho}(P_X,R,s,\widehat{P}) \geq  \min_s\min_{\widehat{P}\in\calP_{\calY|\calX}'(P_X,R,s)} \mathbb{D}(\widehat{P}\Vert W_{Y|X,s}|P_X)+\left|\mathbb{I}(P_X,\widehat{P})-R\right|^{+}.
    \label{eqn:cor_1}
\end{align}
Our next step is to analyze $\min_s\min_{\widehat{P}\in\calP_{\calY|\calX}''(P_X,R,s)}\widehat{\rho}(P_X, R,s,\widehat{P})$. We assume again that the minimization of the term 
$$
    \min_{\widehat{P}\in\calP_{\calY|\calX}''(P_X,R,s)} \mathbb{D}(\widehat{P}\Vert W_{Y|X,s}|P_X)+\left|\min_{s'\neq s}\min_{P'\in \calP_{s,s'}(\widehat{P},P_X,R)}\mathbb{I}(P_X,P')-R\right|^{+} 
$$
is achieved by the tuple $(\widehat{P},s',P')$. \\
Given the tuple $(\widehat{P},s',P')$, if $\mathbb{D}(\widehat{P}\Vert W_{Y|X,s}|P_X) \geq \mathbb{D}(P'\Vert W_{Y|X,s'}|P_X)$, then 
\begin{align}
    &\min_{\widehat{P}\in\calP_{\calY|\calX}''(P_X,R,s)} \mathbb{D}(\widehat{P}\Vert W_{Y|X,s}|P_X)+\left|\min_{s'\neq s}\min_{P'\in \calP_{s,s'}(\widehat{P},P_X,R)}\mathbb{I}(P_X,P')-R\right|^{+} \\
    &=  \mathbb{D}(\widehat{P}\Vert W_{Y|X,s}|P_X)+\left|\mathbb{I}(P_X,P')-R\right|^{+}\\
    &\geq \mathbb{D}(P'\Vert W_{Y|X,s'}|P_X)+\left|\mathbb{I}(P_X,P')-R\right|^{+}\\
    &\geq \min_{P'\in \calP_{\calY|\calX}'(P_X,R,s')}\mathbb{D}(P'\Vert W_{Y|X,s'}|P_X)+\left|\mathbb{I}(P_X,P')-R\right|^{+},
    \label{eqn:cor_2}
\end{align}
where (\ref{eqn:cor_2}) follows from the fact that $P'\in\calP'_{\calY|\calX}(P_X,R,s')$ and has the same form as the right-hand side of (\ref{eqn:cor_1}). On the other hand, if $\mathbb{D}(\widehat{P}\Vert W_{Y|X,s}|P_X) < \mathbb{D}(P'\Vert W_{Y|X,s'}|P_X)$, we have

\begin{align}
    &\min_{\widehat{P}\in\calP_{\calY|\calX}''(P_X,R,s)} \mathbb{D}(\widehat{P}\Vert W_{Y|X,s}|P_X)+\left|\min_{s'\neq s}\min_{P'\in \calP_{s,s'}(\widehat{P},P_X)}\mathbb{I}(P_X,P')-R\right|^{+} \nonumber\\
    &=  \mathbb{D}(\widehat{P}\Vert W_{Y|X,s}|P_X)+\left|\mathbb{I}(P_X,P')-R\right|^{+}\nonumber\\
    &\geq \mathbb{D}(\widehat{P}\Vert W_{Y|X,s}|P_X)+\left|\mathbb{D}(P'\Vert W_{Y|X,s'}|P_X) + \mathbb{H}(P_X\circ P') - \mathbb{D}(P'\Vert W_{Y|X,s'}|P_X)-\mathbb{H}(P'|P_X)-R\right|^{+}\nonumber\\
    &\labrel\geq{eq:P''-set-ineq}  \min_{\widehat{P}\in\calP_{\calY|\calX}''(P_X,R,s)} \mathbb{D}(\widehat{P}\Vert W_{Y|X,s}|P_X) +\Bigg|\mathbb{D}(\widehat{P}\Vert W_{Y|X,s}|P_X) + \mathbb{H}(P_X\circ \widehat{P}) - \beta(\widehat{P},P_X,R,s)-R\Bigg|^{+},
    \label{eqn:cor_3}
\end{align}
where \eqref{eq:P''-set-ineq} follows since $P'\in\calP_{s, s'}(\widehat{P},P_X,R)$ implies that $-\mathbb{D}(P'\Vert W_{Y|X,s'}|P_X)-\mathbb{H}(P'|P_X) \geq -\beta(\widehat{P},P_X,R,s)$ and we use the fact that $P_X\circ P'=P_X\circ \widehat{P}$.

Finally, by taking the minimum over all $s$ and applying the fact that $\rho_{\text{joint}}(P_X,R)$ is the minimum of the right-hand sides of (\ref{eqn:cor_1}) and (\ref{eqn:cor_3}), we conclude that 
\begin{align}
    \rho_{\text{joint}}(P_X,R) \geq \min_s\min\left(\min_{\widehat{P}\in\calP_{\calY|\calX}'(P_X,R,s)}\gamma_1(\widehat{P},R,s),\min_{\widehat{P}\in\calP_{\calY|\calX}''(P_X,R,s)}\gamma_2(\widehat{P},R,s)\right),
\end{align}
where for all $\widehat{P}$, $R$  and $s$, 
\begin{align*}
    \gamma_1(\widehat{P},R,s) &\triangleq  \mathbb{D}(\widehat{P}\Vert W_{Y|X,s}|P_X)+\left|\mathbb{I}(P_X,\widehat{P})-R\right|^{+}\\
    \gamma_2(\widehat{P},R,s) &\triangleq \mathbb{D}(\widehat{P}\Vert W_{Y|X,s}|P_X) +\Bigg|\mathbb{D}(\widehat{P}\Vert W_{Y|X,s}|P_X)  + \mathbb{H}(P_X\circ\widehat{P}) - \beta(P_X,R,s) -R\Bigg|^{+}. 
\end{align*}

\bibliographystyle{IEEEtran}
\bibliography{JointCommSensing-final-one-col.bib}

\end{document}